\documentclass[reprint, aps, prl, superscriptaddress, nofootinbib, 10pt]{revtex4-2}

\usepackage[utf8]{inputenc}
\usepackage{amsmath, amssymb, amsfonts, amsthm, amsopn}
\usepackage{dsfont}
\usepackage{thm-restate}
\usepackage{mathtools}

\usepackage{graphicx}
\usepackage{booktabs}

\usepackage{xcolor}

\usepackage[ruled,vlined,linesnumbered]{algorithm2e}

\usepackage[breaklinks]{hyperref}
\hypersetup{
    colorlinks,
    linkcolor={red!40!black},
    citecolor={blue!50!black},
    urlcolor={blue!20!black}
}

\newtheorem{theorem}{Theorem}
\newtheorem*{theorem*}{Theorem}

\newtheorem{lemma}[theorem]{Lemma}
\newtheorem{definition}[theorem]{Definition}

\newtheorem{observation}[theorem]{Observation}

\newtheorem*{problem*}{Problem}
\newtheorem*{question*}{Question}

\newtheorem*{result*}{Result}

\DeclareMathOperator{\tr}{tr}

\newcommand{\bra}[1]{\mathinner{\langle #1|}}
\newcommand{\ket}[1]{\mathinner{|#1\rangle}}

\renewcommand{\a}{\alpha}
\renewcommand{\b}{\beta}
\renewcommand{\c}{\gamma}

\newcommand{\R}{\mathds{R}}

\newcommand{\expect}[1]{\langle #1 \rangle}

\newcommand{\nn}{\nonumber}

\begin{document}
	\title
	[A 0.651-approximation to quantum Max Cut via Rydberg atoms]
	{A 0.651-approximation to quantum Max Cut via Rydberg atoms}

	\date{\today}
	\author{Tomás Crosta}
	\affiliation{Université de Bordeaux, CNRS, LaBRI, 351 cours de la Libération, 33405 Talence, France}
	\affiliation{Université de Bordeaux, CNRS, LOMA, UMR 5798, F-33400 Talence, France}
	\email{tomycrosta@gmail.com}
	\author{Matthieu Saubanere}
	\affiliation{Université de Bordeaux, CNRS, LOMA, UMR 5798, F-33400 Talence, France}
	\author{Felix Huber}
	\affiliation{
		Division of Quantum Information,
		Faculty of Mathematics, Physics and Informatics,
		University of Gdańsk,
		Wita Stwosza 57, 80-308 Gdańsk, Poland
	}
	\email{felix.huber@ug.edu.pl}

	\thanks{
	We thank
	Juan Mauricio Matera,
	Ojas Parekh,
	Gerard Valentí-Rojas,
	and
	Adrian Tanasa,
	for fruitful discussions.
	This research was funded in whole or in part by the
	National Science Centre, Poland 2024/54/E/ST2/00451
	and
	the Polish National Agency for Academic Exchange under the Strategic Partnership Programme grant
	BNI/PST/2023/1/00013/U/00001.
	For the purpose of Open Access, the author has applied a CC-BY public copyright licence to any Author Accepted Manuscript (AAM) version arising from this submission.
	 This work was supported in part by the Maison du Quantique de Nouvelle-Aquitaine “HybQuant”, as
	part of the HQI initiative and France 2030, under the French National Research Agency (ANR)
	grant ANR-22-PNCQ-0002 and the PEPR Quantique under Grant No. ANR-23-PETQ-0006.
	}

	\begin{abstract}
	Quantum Max Cut, also known as the anti-ferromagnetic Heisenberg Hamiltonian, 
	is a QMA-complete problem which serves as a benchmark for approximation algorithms in quantum physics. Here we develop a {\it hybrid} approximation algorithm to quantum Max Cut, 
		which uses the natural quantum dynamics of Rydberg atom systems
		in combination with semidefinite programming and randomized rounding.
		It achieves a conditional approximation ratio of $0.651$,
		compared to the best-known ratio of $0.614$ that relies on semidefinite programming alone.
		The algorithm is robust in the sense that the advantage persists
		even if the annealing procedure of the Rydberg atom system obtains a state whose energy
		is only $89\%$ of its true ground state energy.
		Our approach opens a new route for hybrid quantum-classical algorithms
		that combine quantum with classical optimization methods.
	\end{abstract}
	\keywords{}
	\maketitle
	\setcounter{tocdepth}{1}
	
\noindent
{\bf Introduction. --- }
	Approximation algorithms have seen a surge
	in the study of quantum many-body systems and quantum information theory.
	Of strong interest is the problem of {quantum Max Cut} (QMC),
	which is the quantum analogue to the classical Max Cut problem~\cite{maxcut}.
	The QMC problem asks to find the ground state energy of the anti-ferromagnetic
	quantum Heisenberg Hamiltonian\footnote{
	We here consider the task of minimizing a Hamiltonian $H$ (i.e. finding its smallest eigenvalue),
	which corresponds to maximizing $-H$ in the computer science literature.},
	\begin{equation}\label{eq:QMc_Hamiltonian}
	H^{\rm qmc} = -\frac{1}{4}\sum_{(ij)\in E} \omega_{ij} \left(I - X_iX_j-Y_iY_j-Z_iZ_j\right)\,,
\end{equation}
where $E$ is the edge set of some graph.
	Quantum Max Cut is known to be QMA-complete:
	It cannot be solved efficiently even
	when having access to a universal quantum computer~\cite{cubitt2016complexity}.
	Having efficient methods to upper and lower bound
	the ground state energy of $H^{\rm qmc} $ is thus highly desirable.

	The relevance of the Heisenberg model for quantum magnetism makes QMC a
	natural test-bed for the performance of {\it approximation algorithms}.
	An approximation algorithm consists of a polynomial-time method to find an approximate solution
	with some guaranteed quality
	to a (typically) computationally hard problem \cite{helton, Navascués_2008, Pironio_2010, baumgratz2012lower}.
	Given an instance $G$ of a problem,
	these methods often formulate
	the task of finding the optimum $\text{OPT}(G)$
	as a relaxation of a polynomial 
	optimization problem.
	This relaxation then is rounded to a feasible point
	with objective value $\text{ALG}(G)$.
	A guarantee of the algorithm is given by
	an approximation ratio $\alpha \in [0,1]$ such that
	 \begin{equation}
	 \frac{\text{ALG}(G) }{\text{OPT}(G)} \geq \alpha
	 \end{equation}
	holds on all instances of the problem.
	For quantum Max Cut, current approximation algorithms employ ansatz solutions
	that combine one and two qubit states
	derived from semidefinite programming~\cite{ParekhAlmost2021, anshu2020beyond, ParekhApplication2021, lee2022optimizing, king2022improved, lee2024improved, gribling2025improved, apte2025improvedalgorithmsquantummaxcut, bakshi2026sharp}.
	The current best approximation algorithm using this method achieves $\alpha = 0.614$ \cite{apte2025improvedalgorithmsquantummaxcut, apte2025conjecturedbounds2localhamiltonians, bakshi2026sharp}\footnote{
	Given an optimal cut of the graph underlying $H^{\rm qmc}$, there exists an algorithm with a
	conditional approximation ratio of $0.625$ \cite{apte2025conjecturedbounds2localhamiltonians, bakshi2026sharp}.}.
	However, such semidefinite programming relaxations are associated with an adverse computational scaling, making
	alternative approaches come into focus~\cite{huber2024second}.

\begin{figure}[tbp]
	\includegraphics[width=1\linewidth]{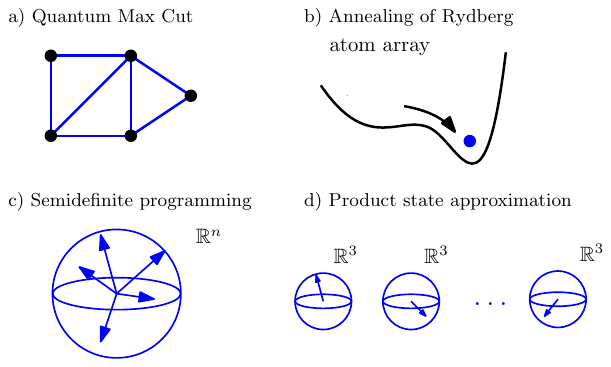}
	\caption{{\bf Sketch of the hybrid algorithm.}
		{\bf a)} A quantum Max Cut instance is encoded into
		{\bf b)} a Hamiltonian of Rydberg atoms with a $(X_i X_j + Y_i Y_j)$
		interaction per edge.
		The Rydberg ground state is prepared via annealing
		and its correlations
		$\langle X_i X_j\rangle$,
		$\langle Y_i Y_j \rangle$
		measured.
		These serve as input to {\bf c)}
		a semidefinite program,
		whose vector solution is rounded via a random projection from $\R^n$ to $\R^3$,
		yielding {\bf d)} a product state.
		The approximation to quantum Max Cut is obtained by returning,
		depending on which performs better,
		either the Rydberg ground state or the product state.
		\label{fig:diagram_algorithm}}
\end{figure}

	Quantum computing promises
	an  efficient approach for solving hard combinatorial optimization problems, for example via annealing or variational algorithms
	\cite{
	lubinski2022advancing,
	cerezo2021variational,
	kim2025quantumannealingcombinatorialoptimization,
	schmid2025hierarchicaldivideconquerquantum,
	zhou2020quantum,
	tene2026variational,
	wang2018quantum,
	Banks2024rapidquantum}.
	Recent experiments suggest that quantum platforms can outperform  approximations from classical computers in some specific tasks
	\cite{
	PhysRevLett.134.090601,
	PhysRevLett.131.150601,
	abanin2025constructive,
	benchmark,
	AmiraChallenges,
	kshetrimayum2026quantumadvantagetensornetwork,
	hangleiter2026quantumadvantageachieved}.
	This raises the question:
	Can classical and quantum methods
	be combined to give stronger approximation guarantees?
	In particular, can one make use of ground states
	of a Hamiltonian that does not encode the original problem,
	but a related one?

	Here we show that this is indeed the case:
	given the ground state of a Rydberg Hamiltonian with
	$XX + YY$ interactions,
	followed by a classical optimization and randomized rounding step,
	we obtain
	an approximation ratio of $0.651$ to quantum Max Cut.
	Our quantum-classical hybrid algorithm relies on the fact
	that the Rydberg Hamiltonian contains two out of the three QMC interaction terms per edge.
	The ground state of this Rydberg system then
	takes two key functions:
	First, it yields an improved
	lower bound on the true ground state energy.
	Second, it serves as a starting point for a classical semidefinite programming machinery,
	from which, via randomized rounding,
	one can construct a product state that
	approximates the true ground state.
	The approximation is given by the
	the Rydberg ground state or the product state
	depending on which achieves a lower energy.
    This hybrid method yields a higher
	approximation ratio than the current classical approaches relying on mathematical optimization alone.
		\begin{figure}[tbp]
		\centering
		\includegraphics[width=0.9\columnwidth]{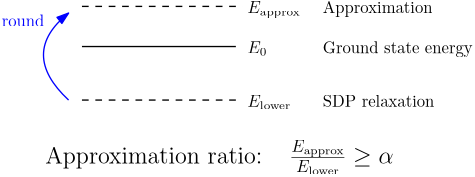}
		\caption{A lower bound of the solution is rounded into a physical state whose energy upper bounds the optimum.
			The approximation ratio of the algorithm then informs about the quality of approximation.}
		\label{fig:illustration_rounding}
	\end{figure}

\smallskip
\noindent
{\bf Moment matrices and randomized rounding. --- }
	Quantum Max Cut (QMC) is the problem of finding the ground state energy
	of the Hamiltonian $H^{\rm qmc}$ constructed from a non-negative weighted graph $G=(V, E, \omega)$, whose edges correspond to Heisenberg interactions,
	\begin{equation}
		H^{\rm qmc} = -\frac{1}{4}\sum_{(ij)\in E} \omega_{ij} \left(I - X_iX_j-Y_iY_j-Z_iZ_j\right)\,.
	\end{equation}
	Here $X_i, Y_i, Z_i$ denote the tensor product operators acting
	on system $i$ with the Pauli matrices $X,Y,Z$
	and with identity on the remaining tensor factors. 
	Technically, finding the largest eigenvalue of $-H^{\rm qmc}$ is known as quantum Max Cut; however in the physics literature one often considers the more natural ground state energy problem.
	
	A converging sequence of lower bounds to this problem is provided by the moment (or Navascués-Pironio-Acín) hierarchy, which at every level consists of a semidefinite programming relaxation to the original problem~\cite{helton, Navascués_2008, Pironio_2010}.
	Given a state $\varrho$
    and a set of observables 
    $\{P_1, \dots, P_m\}$,
	one considers a moment matrix $\Gamma$ with entries
	\begin{equation}\label{eq:pseudo moment matrix}
		\Gamma_{\alpha\beta}=\tr ( P_\alpha^\dag P_\beta \varrho)\,.
	\end{equation}
	It can be shown that
	$\Gamma$ is positive semidefinite, $\Gamma \succeq 0$,
    while satisfying further relations arising from the algebraic relations 
    among the observables $P_\a$.
	
	The key idea of the moment hierarchy is
	to optimize over the set of \textit{pseudo moment matrices}. 
	These are positive semidefinite operators $\Delta \succeq 0$ that satisfy the same linear relations that the $\Gamma$ do,
	but which do not necessarily originate from
	a state.  Thus, optimizing over pseudo-moment matrices $\Delta$ is a relaxation
	of optimizing over the set of moment matrices $\Gamma$.
	Importantly, this optimization can be performed with semidefinite programming~\cite{boyd2004convex}. 
	Specifically, the ground state energy $E_0$ of a Hamiltonian can be lower bounded by
	\begin{align}\label{eq:pseudo_min}
		E_0 \quad \geq \min_{
			\substack{\Delta \text{ a pseudo-MM}}
		}
		\tr(C \Delta)
		\, ,
	\end{align}
	where $C$ is chosen such that $ \sum_{\a\b} C_{\a\b}P_\a^\dag P_\b = H$.
	We refer to Appendix~\ref{eq:lasserre} for further details.

	Surprisingly, one can also obtain {\it upper} bounds on the ground state energy from the solution
	of Eq.~\eqref{eq:pseudo_min}.
	This is done by
	rounding the optimal pseudo moment matrix $\Delta^*$
	to a physical state (see  Fig.~\ref{fig:illustration_rounding}).
	Combining upper and lower bounds on the optimum
	one can then obtain an approximation ratio.
	Various methods exist for doing this~ \cite{ParekhAlmost2021, anshu2020beyond, ParekhApplication2021, lee2022optimizing, king2022improved, lee2024improved, gribling2025improved, apte2025improvedalgorithmsquantummaxcut, bakshi2026sharp}.
	Here we follow the original
	approach of Ref.~\cite{ParekhAlmost2021}: 
	given a real pseudo moment matrix $\Delta^*$ whose entries are indexed by the set of single-qubit Pauli operators,
	$P = \{X_1, Y_1, Z_1, \dots, X_n, Y_n, Z_n\}$,
	a product state $\varrho^{\rm p} = \bigotimes_{i=1}^n \varrho_i$ is constructed. Here,
	\begin{equation}\label{eq:bloch_state}
		\varrho_i = \frac{1}{2} \left(I + a^X_i X_i + a^Y_i Y_i + a^Z_i Z_i\right)\, .
	\end{equation}

	To determine the coefficients $a^X_i, a^Y_i, a^Z_i$, compute the Cholesky decomposition of $\Delta^*$, which returns a real matrix $L$ such that $\Delta^* = L L^T$.
	Divide the entries of $L$ by $\sqrt{3}$.
	For each qubit $i$, 
    concatenate 
    the $3$ columns of $L$ corresponding to $X_i, Y_i, Z_i$, 
    to 
    construct a set of unit vectors $\mathbf{x}_i \in \mathds{R}^{9n}$ \cite{ParekhAlmost2021}.
 	Now sample a random matrix $R \in \mathds{R}^{3\times 9n}$
 	with mean zero and standard deviation one. Then set,
	\begin{equation}\label{eq:proj_moment_matrix}
		\mathbf{v}_i = \frac{R\mathbf{x}_i}{\|R\mathbf{x}_i\|} = \begin{pmatrix}
			a^X_i\\
			a^Y_i\\
			a^Z_i
		\end{pmatrix}\,.
	\end{equation}
	It is known that substituting these coefficients into
	Eq.\eqref{eq:bloch_state} returns at least $0.498$
	of the true ground state energy~\cite{ParekhAlmost2021}.

\smallskip
\noindent
{\bf Approximations via annealing. --- }
Analog quantum platforms offer heuristic tools to estimate the ground state energy of Hamiltonians.
In this paper we take the approach of encoding QMC into an experimentally more accessible \textit{proxy-Hamiltonian}.
Specifically, 
we consider the XY configuration
of a Rydberg atom array.
The Rydberg atoms form spin states 
which are coupled via dipole-dipole interactions \cite{henriet2020quantum, chen2023continuous, bornet2023scalable, titusfloquet}.
The Hamiltonian reads
\begin{equation}\label{eq:Rydberg_Hamiltonian}
	H^{\rm Ryd} = \underbrace{\sum_{i<j} \frac{C_3}{2|\mathbf{r}_i - \mathbf{r}_j|^3} \left( X_i X_j + Y_i Y_j \right)}_{H^{\rm r}} +\underbrace{\frac{\Omega(t)}{2}\sum_i X_i}_{H(t)}\, .
\end{equation}
Here the sum is over all atoms in the array, $C_3$ is a constant which is determined by the interaction strength
and $\Omega(t)$ is the Rabi frequency of an external field.  
Note that the static term  $H^{\rm r}$ from the Rydberg Hamiltonian shares two terms with $H^{\rm qmc}$ of Eq.~\eqref{eq:QMc_Hamiltonian}.
It is thus a natural idea to employ Rydberg atom arrays for approximations of quantum Max Cut.

Suppose that the positions $\mathbf{r}_i$ are chosen such that,
\begin{equation}
 \frac{C_3}{2|\mathbf{r}_i - \mathbf{r}_j|^3} = \omega_{ij}\,.
\end{equation}
This turns the static term in Eq.~\eqref{eq:Rydberg_Hamiltonian}
into
\begin{equation}\label{eq:static_rydberg}
 H^{\rm r} = \sum_{(ij) \in E} \omega_{ij}h^{\rm r}_{ij}\,, \quad\quad 
 h^{\rm r}_{ij} = X_i X_j + Y_i Y_j\,.
\end{equation}

Now let $\varrho^{\rm r}$ and $\varrho^{\rm qmc}$
be the ground states of $H^{\rm r}$ and $H^{\rm qmc}$
respectively so that $\tr(H^{\rm qmc} \varrho^{\rm qmc})$ is the solution of QMC.
While it is trivial to obtain the 
upper bound
$
    \tr(H^{\rm qmc} \varrho^{\rm qmc})\leq \tr(H^{\rm qmc} \varrho^{\rm r})$,
it is also possible to 
derive a {\it lower} bound on $\tr(H^{\rm qmc} \varrho^{\rm qmc})$.
The idea is to write the ground state energy of
$H^{\rm qmc}$ as a function of
$\tr(H^{\rm r} \varrho^{\rm qmc})$
and compare the result with $\tr(H^{\rm r} \varrho^{\rm r})$.

For this, note that the Hamiltonian $H^{\rm qmc}$ is invariant under the diagonal conjugate action of $SU(2)$.
In particular,
it is invariant under the subgroup generated by the Hadamard and phase gates $\langle H, S\rangle$,
which (up to a sign) permute the Pauli matrices.
Then, there always exists an invariant ground state
$\varrho^{\rm qmc}$ 
so that
\begin{equation}
	\tr(X_iX_j \varrho^{\rm qmc}) = \tr(Y_iY_j\varrho^{\rm qmc} ) = \tr(Z_iZ_j\varrho^{\rm qmc} )\,.
\end{equation}
This allows one to write
\begin{align}
		&\tr\big((X_iX_j + Y_iY_j + Z_iZ_j)\varrho^{\rm qmc} \big)
        \nn\\ 
        &=\frac{3}{2}\tr\big((X_iX_j + Y_iY_j) \varrho^{\rm qmc} \big) 
		= \frac{3}{2}\tr( h^{\rm r}_{ij} \varrho^{\rm qmc}) \, .
\end{align}
As a consequence,
\begin{align}
	\tr(H^{\rm qmc} \varrho^{\rm qmc})
	&= -\frac{1}{4}\sum_{(ij)\in E} \omega_{ij} \big( 1-\frac{3}{2}\tr( h^{\rm r}_{ij} \varrho^{\rm qmc}) \big)\,.
\end{align}
But since $\varrho^{\rm r}$ is the ground state of
$H^{\rm r}$, one has
$
 \tr(H^{\rm r}\varrho^{\rm r})
 \leq
 \tr(H^{\rm r} \varrho^{\rm qmc})\,,
$
from which follows that
\begin{equation}\label{eq:bound}
	\tr(H^{\rm qmc} \varrho^{\rm qmc}) \geq -\frac{1}{4}\sum_{(ij)\in E} \omega_{ij} \big( 1-\frac{3}{2}\tr( h^{\rm r}_{ij} \varrho^{\rm r}) \big)\,.
\end{equation}
The lower bound in Eq.~\eqref{eq:bound} will be a key ingredient for deriving the approximation ratio in our proposal.

Note that to map the QMC problem to a Rydberg atom array, 
one has to determine or positions $\mathbf{r_i}$ such that $\omega_{ij} = \frac{C_3}{2|\mathbf{r_i}-\mathbf{r_j}|^3}$. 
This is not always exactly possible,
but one can adjust the interaction terms further
by employing techniques such as
Floquet engineering, state masking, or coherent transport~\cite{
glaetzle2015designing, bluvstein2022quantum, 
Whitlock_2017, titusfloquet}.
In addition to this imperfect 
coefficient-to-distance mapping, 
the preparation of ground states via annealing is hampered by annealing time
and environmental noise. 
However, if the deviation from the true ground state energy
is not too large,
it is still possible to obtain a meaningful
approximation ratio (see Section {\it Robustness}).

\smallskip
\noindent
{\bf Hybrid algorithm. --- }
	We now introduce a hybrid algorithm
	that combines the
	previously described
	quantum and classical approaches.
	This works in the following way:
	first place $n$
	atoms in a Rydberg atom array
	at positions $\mathbf{r}_i$
	so to recreate
	the weights $\omega_{ij}$ of a given quantum Max Cut instance.
	Run an annealing procedure to find the ground state $\varrho^{\rm r}$ of the Rydberg Hamiltonian $H^{\rm r}$ from Eq.~\eqref{eq:static_rydberg}.
	Then 
    for all edges
	$(i j)\in E$
    measure 
	$\expect{X_iX_j+Y_iY_j}$. Here we denote by $\langle O \rangle = \tr(O \varrho^{\rm r})$ the expectation value of an operator $O$. 

	Second, construct a product state $\varrho^{\rm p}$ by rounding a pseudo moment matrix
	which has the observed values
	$\langle X_i X_j\rangle$ and $\langle Y_iY_j \rangle$ on the relevant edges
	as entries.
	To this end we consider pseudo-moment matrices whose entries are indexed by
	$P = \{X_1, Y_1, X_2, \dots , X_n, Y_n\}$.
	The measured expectation values can then be used as constraints in a semidefinite program, 
	\begin{align}\label{eq:SDP_Rydberg}
		\min_{\Delta} \quad & \tr(C \Delta) \nn\\
		\text{s.t.} \quad 
        & \Delta \succeq 0, \nonumber \nn\\
        & \Delta_{\a\a} = 1 && \text{for } \quad\a=\{1, \dots, 2n\}\,, \nn\\
        & \Delta_{\a\b} = \pm \Delta_{\a' \b'}
        && \text{if}\quad P_\a^\dagger P_\b = \pm P_{\a'}^\dagger P_{\b'}\,, \nonumber\\
		& \Delta_{\a\b} = \tfrac{1}{2}\langle X_i X_j+Y_iY_j\rangle
		&& \text{if}\quad P_\a^\dagger P_\b = X_iX_j \nn\\
		&&&
		\text{or}\quad P_\a^\dagger P_\b = Y_iY_j.
	\end{align}
	The pseudo moment matrix which solves this program is rounded into a product state $\varrho^{\rm p}$. The methodology is sketched in the section \textit{Product state energy}, with further details in Appendix~\ref{sec:proofs}.

	Finally, return whichever state $\varrho^{\rm r}$ or $\varrho^{\rm p}$ has lower energy. We summarize:
	\newcounter{algcount}
	\newcommand{\alglabel}[1]{\refstepcounter{algcount}\label{#1}Algorithm \thealgcount}
	\begin{description}
		\item[\alglabel{alg:rounding} (Hybrid QMC)]\hskip0pt\\
		\hrule

		{\bf 1.} Prepare the ground state $\varrho^{\rm r}$ of the Rydberg Hamiltonian $H^{\rm r}$ via quantum annealing.
        Measure $\expect{X_iX_j + Y_iY_j}$ for all edges $(ij) \in E$.

		\smallskip
		{\bf 2.} 
        Solve the semidefinite program in Eq.~\eqref{eq:SDP_Rydberg}
        with the constraints arising from
        $\varrho^{\rm r}$.

        \smallskip
		{\bf 3.}
        Round the solution $\Delta^* $
		into a product state $\varrho^{\rm p}$. Refer to Appendix~\ref{sec:proofs} for precise implementation.

		\smallskip
		{\bf 4.} Return $\arg\min\{\,\operatorname{tr}(H^{\rm qmc}\varrho^r),\operatorname{tr}(H^{\rm qmc}\varrho^p)\,\}$.

		\smallskip
		\hrule
	\end{description}

	Note that Algorithm~\ref{alg:rounding} requires a Rydberg atom system as quantum resource. However, the classical optimization is performed over $\mathcal{O}(n^2)$ real variables, whereas purely classical schemes that lead to entangled states require
	$\mathcal{O}(n^4)$ real variables.

\smallskip
\noindent
{\bf Approximation ratios. --- }
	An approximation algorithm is a polynomial-time algorithm
	that finds an approximation to a computationally hard problem~\cite{CLRS2022}.
	When trying to maximize a function
	over a set of instances (here on graphs $G$),
	it is common to judge its performance by an approximation ratio,
	\begin{equation}\label{eq:approx_ratio}
		\alpha := \min_{G} \frac{\text{ALG}(G)}{\text{OPT}(G)} \quad\in [0,1]\, .
	\end{equation}
	The function $\text{ALG}(G)$ is the approximation returned by the algorithm for an instance $G$ and $\text{OPT}(G)$ its optimal solution.
	This captures the worst case performance of an approximation algorithm, 
    so that $\text{ALG}(G) \geq \alpha~\text{OPT}(G) $ holds for every instance $G$. 

	In practice,
	to derive an approximation ratio for a function on a graph,
	one can introduce
	functions $\text{alg}(i, j)$ and $\text{opt}(i, j)$ for each edge $(ij)$ such that,
	\begin{equation}
	\text{ALG}(G) \geq \sum_{(i j)\in E}\text{alg}(i, j)\,, \quad
	\text{OPT}(G) \leq
	 \sum_{(i j)\in E}\text{opt}(i, j)\,.
	\end{equation}
	It is clear that for all~$G$,
	\begin{equation}\label{eq:lower_ratio}
	 \frac{\text{ALG}(G)}{\text{OPT}(G)}
	\geq
	 \frac{\sum_{(i j)\in E}\text{alg}(i, j)}{\sum_{(i j)\in E} \text{opt}(i, j)}
	 \geq 
	 \min_{(i j)\in E} \frac{\text{alg}(i, j)}{\text{opt}(i, j)}\,.
	\end{equation}

	We will use this strategy
	in Section {\it Approximation ratio of Algorithm~\ref{alg:rounding}} to find guarantees.
	In our context it is natural to define a {\it conditional approximation ratio}.
	
	\begin{definition}[Conditional approximation ratio]
		Consider an algorithm 
        $\textup{ALG}$
        that for each $G$ 
        has access to the optimal point
        $p^*(G)$ of a secondary hard problem.
		Then $\textup{ALG}$  has a
		\emph{conditional approximation ratio}
		$\kappa$ if
		\begin{equation}
			\min_{G} \,\, \frac{\textup{ALG}(G, p^*(G))}{\textup{OPT}(G)} \geq \kappa \, .
		\end{equation}
		\end{definition}

\smallskip
\noindent
{\bf Product state energy. --- }
    We first need to compute the energy of the product state $\varrho^{\rm p}$ from Algorithm~\ref{alg:rounding}.

    The optimization from Eq.~\eqref{eq:SDP_Rydberg} finds a pseudo moment matrix $\Delta^*$ that achieves the experimentally observed value $\tr(H^{\rm r} \varrho^{\rm r})$.
	Next, we compute the Cholesky decomposition of $\Delta^*\in \mathds{R}^{2n\times 2n}$, which returns a real matrix $L$ such that $\Delta^* = L L^T$. We construct the vectors  $\mathbf{x}_i\in \mathds{R}^{4n}$ obtained by concatenating the two columns of $L$ associated to the operators $X_i$ and $Y_i$.
    These are then normalized and rounded with a random matrix $R\in \mathds{R}^{3 \times 4n}$
	as in Eq.~\eqref{eq:proj_moment_matrix}.
			
		\begin{restatable}[]{observation}{prodenergy}
			\label{obs:prod_energy}
			Let $\varrho^{\rm p}$ be the product state obtained from rounding the solution $\Delta^*$ of
			Eq.~\eqref{eq:SDP_Rydberg}. Then
	\begin{align}\label{eq:prod_energy}
		\mathbb{E}\left[\tr(H^{\rm qmc} \varrho^{\rm p})\right] &=
		-\frac{1}{4}\sum_{(ij)\in E}\omega_{ij}\Big(1-\frac{8}{3\pi} \hat{F}(t_{ij})\Big)\,,\\
		\hat{F}(t_{ij}) &= t_{ij}\cdot {}_2F_1 \bigg(
		\begin{matrix} 1/2, 1/2 \\ 5/2 \end{matrix} 
		; t^2_{ij} 
		\bigg)\,,\nn
	\end{align}
	where
	$t_{ij}= \frac{1}{2}\expect{X_iX_j + Y_iY_j}$ and
	${}_2F_1$ is the hypergeometric function.
	\end{restatable}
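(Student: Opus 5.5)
The plan is to compute the expectation edge by edge and then use linearity of expectation. For a product state $\varrho^{\rm p}=\bigotimes_i\varrho_i$ with Bloch vectors $\mathbf v_i$ as in Eq.~\eqref{eq:bloch_state}, we have $\tr(P_iP_j\,\varrho^{\rm p})=a^P_ia^P_j$ for $P\in\{X,Y,Z\}$. Hence
\begin{equation*}
\tr\big((I-X_iX_j-Y_iY_j-Z_iZ_j)\varrho^{\rm p}\big)=1-\mathbf v_i\cdot\mathbf v_j ,
\end{equation*}
and therefore
\begin{equation*}
\mathbb E\big[\tr(H^{\rm qmc}\varrho^{\rm p})\big]=-\tfrac14\sum_{(ij)\in E}\omega_{ij}\big(1-\mathbb E[\mathbf v_i\cdot\mathbf v_j]\big).
\end{equation*}
The claim then reduces to showing $\mathbb E[\mathbf v_i\cdot\mathbf v_j]=\frac{8}{3\pi}\hat F(t_{ij})$.

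\emph{Step 1: identify the inner product of the unit vectors.} Write $\ell_{X_i},\ell_{Y_i}$ for the rows of $L$ indexed by $X_i,Y_i$, so that $\Delta^*_{\a\b}=\ell_\a\cdot\ell_\b$. Each row is a unit vector, since $\Delta^*_{\a\a}=1$. The concatenation $(\ell_{X_i},\ell_{Y_i})\in\R^{4n}$ therefore has norm $\sqrt2$, and we set $\mathbf x_i=(\ell_{X_i},\ell_{Y_i})/\sqrt2$, which is a unit vector. Then
\begin{equation*}
\mathbf x_i\cdot\mathbf x_j=\tfrac12\big(\Delta^*_{X_iX_j}+\Delta^*_{Y_iY_j}\big).
\end{equation*}
On edges, the SDP constraints of Eq.~\eqref{eq:SDP_Rydberg} fix both entries to $\tfrac12\expect{X_iX_j+Y_iY_j}$. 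This gives $\mathbf x_i\cdot\mathbf x_j=t_{ij}$.

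\emph{Step 2: the rounding expectation.} Since $R$ has i.i.d.\ standard Gaussian entries, the pair $(R\mathbf x_i,R\mathbf x_j)$ depends only on the Gram matrix of $\mathbf x_i,\mathbf x_j$, by rotational invariance. We may therefore take $\mathbf x_i=e_1$ and $\mathbf x_j=te_1+\sqrt{1-t^2}e_2$. Then $g=Re_1$ and $h=Re_2$ are independent standard Gaussian vectors in $\R^3$, and we need
\begin{equation*}
\mathbb E\Big[\frac{g}{\|g\|}\cdot\frac{tg+\sqrt{1-t^2}h}{\|tg+\sqrt{1-t^2}h\|}\Big].
\end{equation*}
This is the known identity of Bri\"et--Oliveira--Vallentin for projections to $\R^k$:
\begin{equation*}
\mathbb E\Big[\frac{Ru}{\|Ru\|}\cdot\frac{Rw}{\|Rw\|}\Big]=\frac2k\Big(\frac{\Gamma((k+1)/2)}{\Gamma(k/2)}\Big)^2 t\;{}_2F_1\Big(\tfrac12,\tfrac12;\tfrac k2+1;t^2\Big).
\end{equation*}
This is the same ingredient used in Ref.~\cite{ParekhAlmost2021}. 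Setting $k=3$ gives
\begin{equation*}
\frac23\Big(\frac{\Gamma(2)}{\Gamma(3/2)}\Big)^2=\frac23\cdot\frac4\pi=\frac{8}{3\pi},
\end{equation*}
with hypergeometric parameter $5/2$, exactly as in Eq.~\eqref{eq:prod_energy}.

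The main obstacle is Step 2 if one insists on deriving the identity rather than citing it. If I had to derive it, I would proceed in three moves:
\begin{itemize}
\item Condition on $g$ and use the $O(3)$ symmetry to reduce the expectation to the function $\mathbb E[\cos\theta]$, where $\theta$ is the angle between $g$ and $tg+\sqrt{1-t^2}h$.
\item Expand this function in a power series in $t$ using Gegenbauer (Legendre, for $k=3$) polynomial expansions, or Gaussian moment computations.
\item Match the resulting coefficients with those of $t\,{}_2F_1(1/2,1/2;5/2;t^2)$.
\end{itemize}
As a consistency check, the extremes $t=\pm1$ should give $\pm1$, and indeed $\frac{8}{3\pi}\,{}_2F_1(\tfrac12,\tfrac12;\tfrac52;1)=1$ by Gauss's summation theorem.
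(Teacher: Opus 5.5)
Your proposal is correct and matches the paper's proof. Both reduce the energy to $-\frac14\sum_{(ij)\in E}\omega_{ij}\bigl(1-\mathbb{E}[\mathbf v_i\cdot\mathbf v_j]\bigr)$, use the SDP constraints to get $\mathbf x_i\cdot\mathbf x_j=t_{ij}$, and apply the Bri\"et--de Oliveira Filho--Vallentin identity with $r=3$ to obtain the factor $\frac{8}{3\pi}$ and the hypergeometric parameter $5/2$. Your extra details are sound refinements, not a different route: taking rows of $L$ so that $\Delta^*_{\a\b}=\ell_\a\cdot\ell_\b$ (the right convention for $\Delta^*=LL^T$), and the Gauss-summation check at $t=\pm1$.
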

	The proof can be found in Appendix~\ref{sec:proofs}.
	Moreover, Appendix~\ref{sec:2_3} proves the {\it existence} of a product state attaining at least
	$-\frac{1}{4}\sum_{(ij)\in E}\omega_{ij}(1-\frac{1}{2}t_{ij})$, for which we however do not know an algorithm.

\smallskip
\noindent
{\bf Upper and lower bounds. --- }
	To obtain a conditional approximation ratio for Algorithm~\ref{alg:rounding}, we need
upper and lower bounds on the ground state energy.
When considering the {\it minimization} of a negative function, we can apply the same analysis from Eq.~\eqref{eq:lower_ratio}.
The difference is that one needs an
{\it upper bound} on the approximation
and a {\it lower bound} on the optimum.

    The lower bound on QMC is given by Eq.~\eqref{eq:bound} using the measurements  $\{\expect{X_iX_j + Y_iY_j}\}_{(ij)\in E}$.
    The
	upper bound 
	is obtained from the energies
	$\tr(H^{\rm qmc} \varrho^{\rm p})$ and
	$\tr(H^{\rm qmc} \varrho^{\rm r})$.
	We computed the energy $\tr(H^{\rm qmc} \varrho^{\rm p})$ in the previous section, now we focus on
	$\tr(H^{\rm qmc} \varrho^{\rm r})$.
	
To upper bound the energy
of $\varrho^{\rm r}$
	it suffices 
	to upper bound 
	$\expect{Z_iZ_j}$
	given the value of
	$\expect{X_iX_j + Y_iY_j}$.

	\begin{restatable}[]{lemma}{boundZiZj}\label{theo:bound_ZiZj}
		For all states it holds that
		\begin{align}
			 \expect{Z_iZ_j}\leq 1 - 2|t_{ij}|, \quad t_{ij} = \frac{1}{2}\expect{X_iX_j + Y_iY_j}\,.
		\end{align}
	\end{restatable}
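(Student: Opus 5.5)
Since $\expect{Z_iZ_j}$, $\expect{X_iX_j}$ and $\expect{Y_iY_j}$ only depend on the two-qubit reduced state $\varrho_{ij}$ of qubits $i,j$, it suffices to prove the inequality for an arbitrary two-qubit density matrix. The plan is to recast it as an operator inequality: showing $\expect{Z_iZ_j}\le 1-2|t_{ij}|$ for all states amounts to showing that both operators
\begin{equation}
 A_\pm = I - Z_iZ_j \mp (X_iX_j+Y_iY_j)
\end{equation}
are positive semidefinite. Indeed, $\tr(A_\pm\varrho)\ge 0$ gives $\expect{Z_iZ_j}\le 1\mp 2t_{ij}$, and taking the tighter of the two yields $1-2|t_{ij}|$.

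To check $A_\pm\succeq 0$ we diagonalize simultaneously. The operators $X_iX_j$, $Y_iY_j$ and $Z_iZ_j$ commute pairwise and are diagonal in the Bell basis $\{\Phi^\pm,\Psi^\pm\}$. Their eigenvalues $(xx,yy,zz)$ are
\begin{itemize}
\item $(1,-1,1)$ on $\Phi^+$,
\item $(-1,1,1)$ on $\Phi^-$,
\item $(1,1,-1)$ on $\Psi^+$,
\item $(-1,-1,-1)$ on $\Psi^-$.
\end{itemize}
Consequently $X_iX_j+Y_iY_j$ takes the values $0,0,2,-2$ on these states, while $Z_iZ_j$ takes the values $1,1,-1,-1$. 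The eigenvalues of $I-Z_iZ_j\mp(X_iX_j+Y_iY_j)$ are then $0,0,2\mp2,2\pm2$, all of which are nonnegative. This proves the lemma.

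The only step needing care is the correct sign bookkeeping of the Bell-basis eigenvalues, in particular $YY=-1$ on $\Phi^+$. Equivalently, one can argue in the subspaces: $X_iX_j+Y_iY_j=2(\sigma^+\sigma^-+\sigma^-\sigma^+)$ vanishes on $\mathrm{span}\{|00\rangle,|11\rangle\}$, where $Z_iZ_j=1$. On $\mathrm{span}\{|01\rangle,|10\rangle\}$ it acts as $2\sigma_x$ with eigenvalues $\pm 2$, while $Z_iZ_j=-1$ there. This gives the same spectrum and serves as a cross-check. No earlier results of the paper are needed, since the statement holds for all states, not only for $\varrho^{\rm r}$.
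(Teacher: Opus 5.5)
Your proof is correct, and it takes a more direct route than the paper's. The paper forms the moment matrix of $\varrho$ indexed by $\{I, X_iX_j, Y_iY_j, Z_iZ_j\}$ and relaxes to an SDP that maximizes $\expect{Z_iZ_j}$ subject to $\expect{X_iX_j+Y_iY_j}=2t_{ij}$. It then uses invariance under $X\mapsto Y$, $Y\mapsto -X$ to symmetrize to $\expect{X_iX_j}=\expect{Y_iY_j}=t_{ij}$, and reads off the eigenvalues $1\pm 2t_{ij}-\gamma$ and $1+\gamma$ (twice).

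You instead give an explicit operator certificate. In fact $A_+=4\ket{\Psi^-}\bra{\Psi^-}$ and $A_-=4\ket{\Psi^+}\bra{\Psi^+}$, so your two inequalities are just $\bra{\Psi^\mp}\varrho_{ij}\ket{\Psi^\mp}\ge 0$.

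The two arguments are dual. Because $X_iX_j$, $Y_iY_j$, $Z_iZ_j$ commute and are Bell-diagonal, the paper's moment matrix equals $\sum_B p_B\, v_Bv_B^T$. Here $p_B=\bra{B}\varrho_{ij}\ket{B}$, and $v_B$ is the vector of eigenvalues of $(I,X_iX_j,Y_iY_j,Z_iZ_j)$ on $\ket{B}$; these vectors are mutually orthogonal with squared norm $4$. So the paper's $\lambda_1=1+2t_{ij}-\gamma$ and $\lambda_2=1-2t_{ij}-\gamma$ are exactly $4p_{\Psi^+}$ and $4p_{\Psi^-}$, and their nonnegativity is precisely your $\tr(A_\mp\varrho)\ge 0$.

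Your version gains brevity and needs no symmetry-reduction step, since $A_\pm$ depend only on $X_iX_j+Y_iY_j$ and there is nothing to symmetrize. The paper's SDP formulation gains two things: it fits the moment-matrix machinery used elsewhere in the paper, and it computes the optimum of the relaxation. That shows $1-2|t_{ij}|$ is the best bound obtainable from $t_{ij}$ alone. It is attained by Bell-diagonal states with no weight on $\Psi^-$ when $t_{ij}\ge 0$, or on $\Psi^+$ when $t_{ij}\le 0$. The paper also gets the companion bound $\expect{Z_iZ_j}\ge -1$. None of this is needed for the lemma as stated.
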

 The proof is in Appendix \ref{proof:zizj}.
	Now recall that the QMC energy of
	 $\varrho^{\rm r}$ is
	\begin{equation}
		\tr(H^{\rm qmc} \varrho^{\rm r}) =
		-\frac{1}{4}\sum_{(ij)\in E} \omega_{ij}\expect{I - X_iX_j- Y_iY_j - Z_iZ_j}\, .
	\end{equation}
	Applying Lemma~\ref{theo:bound_ZiZj} yields the upper bound
	\begin{align}\label{eq:energy_Rydberg}
		\tr(H^{\rm qmc}\varrho^{\rm r} ) &\leq  -\frac{1}{4}\sum_{(ij) \in E} \omega_{ij} \Big(1-2t_{ij} - \underbrace{(1-2|t_{ij}|)}_{ \geq \tr(Z_iZ_j \varrho^{\rm r}) }\Big)\nn\\
		&= \sum_{(ij) \in E} \omega_{ij} \min\big\{t_{ij}, 0\big\} \,,
	\end{align}
	where $t_{ij} = \frac{1}{2}\expect{X_iX_j + Y_iY_j}$, with $t_{ij}\in [-1, 1]$. 
	
\smallskip
\noindent {\bf Approximation ratio of Algorithm~\ref{alg:rounding}. --- }
    Algorithm~\ref{alg:rounding}  requires access to the ground state $\varrho^{\rm r}$
       of the Hamiltonian $H^{\rm r}$. 
        This lower bounds QMC by Eq.~\eqref{eq:bound}. 
        The approximation ratio of Algorithm~\ref{alg:rounding} is then obtained by comparing this bound with
        the approximation from Observation~\ref{obs:prod_energy} and the upper bound of Eq.~\eqref{eq:energy_Rydberg}.
	
	\begin{theorem}\label{eq:main_theorem}
		Algorithm~\ref{alg:rounding} has a conditional approximation ratio of $\kappa = 0.651$.
	\end{theorem}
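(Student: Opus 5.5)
The plan is to reduce the global statement to a one-variable inequality in the edge parameter $t_{ij}$, using a convex-combination argument. Step~4 of Algorithm~\ref{alg:rounding} returns the better of the two states globally, not edge by edge. So for any fixed $\lambda\in[0,1]$,
\begin{equation*}
\min\{\tr(H^{\rm qmc}\varrho^{\rm r}),\mathbb{E}\tr(H^{\rm qmc}\varrho^{\rm p})\}\le \lambda\,\mathbb{E}\tr(H^{\rm qmc}\varrho^{\rm p})+(1-\lambda)\tr(H^{\rm qmc}\varrho^{\rm r}).
\end{equation*}
I work with the negated energies, i.e.\ values to be maximized. For each edge I define
\begin{equation*}
p(t)=\tfrac14\big(1-\tfrac{8}{3\pi}\hat F(t)\big),\quad r(t)=\max\{-t,0\},\quad o(t)=\tfrac18(2-3t).
\end{equation*}
By Observation~\ref{obs:prod_energy}, the expected value of $\varrho^{\rm p}$ is $\sum\omega_{ij}p(t_{ij})$. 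By Eq.~\eqref{eq:energy_Rydberg} (from Lemma~\ref{theo:bound_ZiZj}), the value of $\varrho^{\rm r}$ is at least $\sum\omega_{ij}r(t_{ij})$. By Eq.~\eqref{eq:bound}, the optimum value is at most $\sum\omega_{ij}o(t_{ij})$, where the $t_{ij}$ come from the exact Rydberg ground state; this is the conditional input.

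The sign of $o$ needs care: $o(t)<0$ for $t>2/3$. I therefore avoid the ratio form of Eq.~\eqref{eq:lower_ratio} and instead prove the pointwise inequality
\begin{equation*}
h_\lambda(t):=\lambda p(t)+(1-\lambda)r(t)-\kappa\, o(t)\ \ge 0\qquad\text{for all } t\in[-1,1].
\end{equation*}
Multiplying by $\omega_{ij}\ge0$ and summing over edges then gives $\text{ALG}\ge\kappa\,\text{OPT}$ in expectation. Because $\kappa>0$, the edges with $o(t)<0$ only help.

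For the one-dimensional analysis I use that $\hat F$ is odd and increasing, with $\hat F(t)\approx t$ near $0$. At the endpoint, $\hat F(1)={}_2F_1(\tfrac12,\tfrac12;\tfrac52;1)=\Gamma(\tfrac52)\Gamma(\tfrac32)/\Gamma(2)^2=3\pi/8$, which gives $p(\pm1)\in\{0,\tfrac12\}$.

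On $t\ge0$ we have $r=0$, so the condition reads $\lambda p(t)\ge\kappa o(t)$. At $t=0$ this forces $\lambda\ge\kappa$. For $t>0$ the ratio $p/o=2(1-\tfrac{8}{3\pi}\hat F(t))/(2-3t)$ is nondecreasing on $[0,2/3)$, since $\tfrac{8}{3\pi}\hat F(t)$ grows more slowly than $\tfrac32 t$; I would verify this by a derivative bound on $\hat F$ via its series. For $t\ge2/3$ the inequality holds trivially. So I take $\lambda=\kappa$, or slightly larger.

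On $t<0$ the Rydberg term $(1-\lambda)(-t)$ enters and competes with the linear growth of $o$. At $t=-1$ the check is $0.651\cdot\tfrac12+0.349\ge0.651\cdot\tfrac58$, which holds with room to spare. The real constraint is therefore in the interior of $[-1,0]$, and $\kappa$ is determined by
\begin{equation*}
\kappa=\max_{\lambda}\min_{t\in[-1,0]}\frac{\lambda p(t)+(1-\lambda)(-t)}{o(t)},
\end{equation*}
together with $\lambda\ge\kappa$ from the positive side.

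The main obstacle is this interior minimization. It involves the transcendental function $\hat F$ and must be certified rigorously rather than only numerically. My plan is first to locate the optimal $\lambda$ and the critical $t^*$ numerically, expecting the optimum at $\lambda=\kappa$ with a tangency at some $t^*\in(-1,0)$. I would then certify $h_\lambda\ge0$ by bounding $\hat F$ with truncated hypergeometric series, whose coefficients are all positive so the truncation errors are controlled. This is combined with a fine grid on $[-1,0]$ and a Lipschitz bound on $h_\lambda$ between grid points. Finally, I would note that the expectation statement can be converted into a high-probability one by repeating the rounding step.
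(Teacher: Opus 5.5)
Your architecture matches the paper's. You bound the returned energy by a convex combination of $\varrho^{\rm r}$ and $\varrho^{\rm p}$, apply Eq.~\eqref{eq:bound}, Eq.~\eqref{eq:energy_Rydberg} and Observation~\ref{obs:prod_energy} edge by edge, and reduce to a one-variable max--min over the weight and $t$. Your pointwise inequality $h_\lambda(t)\ge 0$ handles the sign change of the optimum's bound more cleanly than the paper's replacement of $1-3t$ by $\max\{1-3t,0\}$. Your plan to certify the max--min rigorously also goes beyond the paper, which only evaluates it numerically.

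However, your $o(t)$ is wrong. Eq.~\eqref{eq:bound} is written in terms of $\tr(h^{\rm r}_{ij}\varrho^{\rm r})=\expect{X_iX_j+Y_iY_j}=2t_{ij}$, not $t_{ij}$. So $\tfrac32\tr(h^{\rm r}_{ij}\varrho^{\rm r})=3t_{ij}$, and the per-edge bound is $o(t)=\tfrac14(1-3t)$, which is negative already for $t>1/3$. Your $o(t)=\tfrac18(2-3t)$ is not an upper bound on the optimum. Take a single edge: the Rydberg ground state is the singlet, so $t=-1$ and the QMC optimum is $1$, yet your $o(-1)=5/8$.

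This error changes the one-dimensional problem you would certify. With your $o$ and $s=-t$, the condition near $t=0^-$ becomes $\lambda\bigl(\tfrac{8}{3\pi}\hat F(s)/s-\tfrac32\bigr)+4(1-\lambda)\ge 0$. This allows $\lambda=\kappa$ up to about $0.86$, with the binding point at $t\to0^-$ rather than an interior tangency. Your grid-and-Lipschitz certification would therefore ``prove'' a ratio near $0.86$ from an invalid bound.

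With the corrected $o(t)=\tfrac14(1-3t)$, things line up with the paper:
\begin{itemize}
\item The positive-side monotonicity argument carries over on $[0,1/3)$.
\item The check at $t=-1$ becomes $1-\lambda/2\ge\kappa$, i.e.\ $0.6745\ge 0.651$, with far less slack than you claim.
\item The binding constraint is an interior tangency near $t\approx-0.3$.
\item With weight $\lambda\approx0.655$ on $\varrho^{\rm p}$ (Rydberg weight $0.345$), $\min_t$ is about $0.6512$, matching the paper.
\end{itemize}
Note that $\lambda=\kappa=0.651$ exactly does not suffice: at $t=-0.25$ the combination is about $0.6509<0.651$. So the ``slightly larger'' option in your plan is necessary, not optional.
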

	\begin{proof}
	The approximation ratio requires suitable
	lower bounds on the optimum and
	upper bounds on the approximation.
	First, recall the lower bound from Eq.~\eqref{eq:bound} on the optimum.
	Second, obtain an upper bound on the approximation
	${\rm arg}\min\{\tr(H^{\rm qmc}\varrho^{\rm r}),
	\tr(H^{\rm qmc}) \varrho^{\rm p}\}$
	by the convex combination,
	\begin{align}\label{eq:approx1}
		&{\rm arg}\min\big\{\tr(H^{\rm qmc} \varrho^{\rm r}) , \tr(H^{\rm qmc}\varrho^{\rm p} )\big \} \nn\\ &\quad\quad   \leq p \tr(H^{\rm qmc} \varrho^{\rm r} )
			+ (1-p) \tr(H^{\rm qmc}\varrho^{\rm p})\,,
	\end{align}
	where $p\in [0, 1]$.
        
        Therefore, by Eqs.~\eqref{eq:bound},~\eqref{eq:lower_ratio},~\eqref{eq:approx1},		
		\begin{align}\label{eq:convex_energy}
			\kappa &\geq \frac{{\rm arg}\min\{\tr(H^{\rm qmc} \varrho^{\rm r} ),\tr(H^{\rm qmc}\varrho^{\rm p}\})}{\tr(H^{\rm qmc}\varrho^{\rm qmc})}\nn\\
			&\geq \max_{p\in [0, 1]}\frac{p \tr(H^{\rm qmc} \varrho^{\rm r} ) + (1-p) \tr(H^{\rm qmc}\varrho^{\rm p})}{-\frac{1}{4}\sum_{(ij)\in E}\omega_{ij} (1 - 3 t_{ij})}\,,
		\end{align}
		where $t_{ij} = 
        \frac{1}{2} \expect{X_iX_j + Y_iY_j}= 
        \tfrac{1}{2}
        \tr(h_{ij} \varrho^{\rm r})$.
		By Eq.~\eqref{eq:bound} we see that for $t_{ij}\geq \frac{1}{3}$ the edge $(ij)$ has a positive value.

		However, to compute the approximation ratio using Eq.~\eqref{eq:convex_energy},
		the values on all edges would need to be non-positive. We consider the weaker lower bound
		\begin{align}\label{eq:weaker_bound}
			\tr(H^{\rm qmc}\varrho^{\rm qmc}) &\geq -\frac{1}{4}\!\!\sum_{(ij) \in E} 1 - 3t_{ij} \nn\\
			&\geq -\frac{1}{4}\!\!\sum_{(ij)\in E} {\rm arg}\max\{1- 3t_{ij}, 0\}\, .
		\end{align}
		Then by Eq.~\eqref{eq:lower_ratio},
		the ratio is at least that of the worst performing edge
        for which we use the variable 
        $t\in [-1, \frac{1}{3}]$ 

        Inserting in Eq.~\eqref{eq:convex_energy} the 
        energies
        $\varrho^{\rm r}$
        [Eq.~\eqref{eq:energy_Rydberg}] 
        and
        of $\varrho^{\rm p}$
        [Eq.~\eqref{eq:prod_energy}]
        yields numerically
		\begin{align}\label{eq:ratio_alg}
			\kappa &\geq \max_{p\in [0, 1]}\min_{t\in [-1, \frac{1}{3})} 
            \bigg\{
            p\frac{-4t}{1-3t} 
			+ (1-p) 
			\frac{1-\frac{8}{3\pi} \hat{F}(t)}{1- 3 t} \bigg\}
            \nn\\ &\approx 0.651 \, .
		\end{align}
		This ends the proof.
\end{proof}
	 We refer to Appendix~\ref{append:numerics} for numerical results regarding the performance of Algorithm~\ref{alg:rounding} in specific graphs.

\smallskip
\noindent{\bf Robustness. ---}
Theorem~\ref{eq:main_theorem} requires access to the exact ground state of $H^{\rm r}$ [Eq.~\eqref{eq:static_rydberg}]. 
To assess the performance of Algorithm~\ref{alg:rounding} in noisy hardware, we compute the regime for which the method
yields better guarantees than purely classical solutions.
    
Let $\varrho^A$ be a state such that $\tr(H^{\rm r}\varrho^A ) = \eta \tr(H^{\rm r}\varrho^{\rm r})$
for $\eta\in [0, 1]$.
Using Eq.~\eqref{eq:bound} we have,
		\begin{equation}\label{eq:noisy_bound}
			\tr(H^{\rm qmc} \varrho^{\rm qmc}) \geq -\frac{1}{4}\sum_{(ij)\in E} \omega_{ij} \left(1-\frac{3}{\eta}\, t_{ij}\right)\,,
		\end{equation}
		 where $t_{ij} = \frac{1}{2}\tr((X_iX_j + Y_iY_j)\varrho^A )$. 
		Then, following the proof from Theorem~\ref{eq:main_theorem}, we combine the lower bound Eq.~\eqref{eq:noisy_bound} and the upper bound Eqs.~\eqref{eq:prod_energy},~\eqref{eq:energy_Rydberg}. One sees that $ \kappa = 0.614$ for $\eta = 0.89$.

\smallskip
\noindent
	{\bf Outlook. --- } 
	Our hybrid approximation algorithm for quantum Max Cut surpasses
	the currently best-known classical algorithm,
	even when the quantum computer
	achieves only $0.89$ of the Rydberg ground state energy.
	Moreover, while classical algorithms are focusing on solving an SDP with $\mathcal{O}(n^4)$ variables,
	our approach requires $\mathcal{O}(n^2)$
	when combined with quantum resources.

	Our work raises several natural questions:
	What ratios can be obtained in practice
	with noisy hardware for studying large systems?
	Can other platforms be used for similar approximations?
	
	\bibliographystyle{apsrev4-2}
	\bibliography{bibliography}
	\newpage
	\appendix
	
	\setcounter{secnumdepth}{1} 
	\makeatletter
	\renewcommand{\@hangfrom@section}[2]{#1\ \ #2: }
	\makeatother

		\section{Numerical results}\label{append:numerics}
		\subsection{Approximation for random graphs}
		The main text introduced Algorithm~\ref{alg:rounding}
		to approximate the ground state energy of the QMC Hamiltonian,
		\begin{equation}
			H^{\rm qmc} = - \frac{1}{4}\sum_{(ij)\in E} \omega_{ij} (I - X_iX_j - Y_iY_j - Z_iZ_j)\, .
		\end{equation}
		We compared the upper bound from Eqs.~\eqref{eq:prod_energy},~\eqref{eq:energy_Rydberg} with the lower bound from Eq.~\eqref{eq:weaker_bound}.
		The approximation ratio of Algorithm~\ref{alg:rounding} is then obtained by minimizing the function in Eq.~\eqref{eq:ratio_alg}.  Fig.~\ref{fig:ratios_per_edge} shows that the convex combination $\varrho = p\varrho^{\rm p} + (1-p) \varrho^{\rm r}$ yields the lower bound $\kappa=0.651$ at $p=0.345$.
		
		\begin{figure}[htb]
			\centering
			\includegraphics[width=\linewidth]{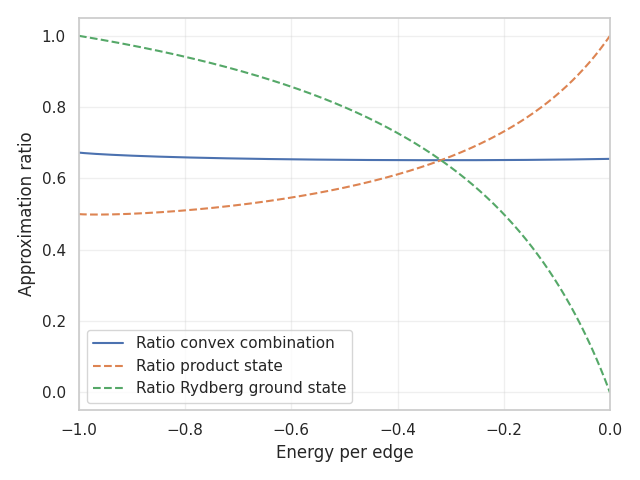}
			\caption{
			Lower bound on the approximation ratio versus the energy of the worst-performing edge. The states plotted are $\varrho^{\rm r}$, $\varrho^{\rm p}$, and $\varrho = p \varrho^{\rm r} + (1-p) \varrho^{\rm p}$ at $p=0.345$. For the mixed state, the curve achieves an approximation ratio of $\kappa = 0.651$.
			}
			\label{fig:ratios_per_edge}
		\end{figure}

	However, the performance of
	Algorithm~\ref{alg:rounding} often surpasses the specific ratio given by Theorem~\ref{eq:main_theorem}.
	To see this, we compute
	for small graphs
	numerically the ratio
	\begin{equation}\label{eq:ratio_append}
		\mu = \frac{\tr(H^{\rm qmc} \varrho^{\rm r})}{\tr(H^{\rm qmc}\varrho^{\rm qmc})}\,.
	\end{equation}
	Here $\varrho^{\rm r}$ and $\varrho^{\rm qmc}$ denote the ground states of the Hamiltonians $H^{\rm r}$ and $H^{\rm qmc}$, respectively.
	Specifically, consider a fully connected graph with independent and uniformly random weights in the interval
	$\omega_{ij}\in [0, 1]$.
	We will only consider the coefficients for which
	$H^{\rm r}$ presents non-degenerate ground states.
	The reason is that computing $\mu$ requires evaluating $\tr(Z_iZ_j \varrho^{\rm r})$. In the degenerate case, this quantity can depend on the ground state selected.
	Therefore, while Rydberg Hamiltonians with degenerate ground states may be of experimental interest, it is difficult to determine the worst-case performance in such instances using exact diagonalization.~\footnote{In order to find which ground state is returned by the quantum platform, one could simulate the time evolution of the Rydberg array \cite{Silverio2022pulseropensource}.}

	Fig.~\ref{fig:random_ratios} shows the ratio $\mu$ for 1000 coefficients in $6$-, $12$- and $18$-qubit systems.
	The ratio is always above $0.95$,
	but the mean value of $\mu$ decreases with the number of qubits. This indicates that $\mu$ can exceed the approximation ratio by a large margin.  
	
	\begin{figure}[htb]
		\centering
		\includegraphics[width=\linewidth]{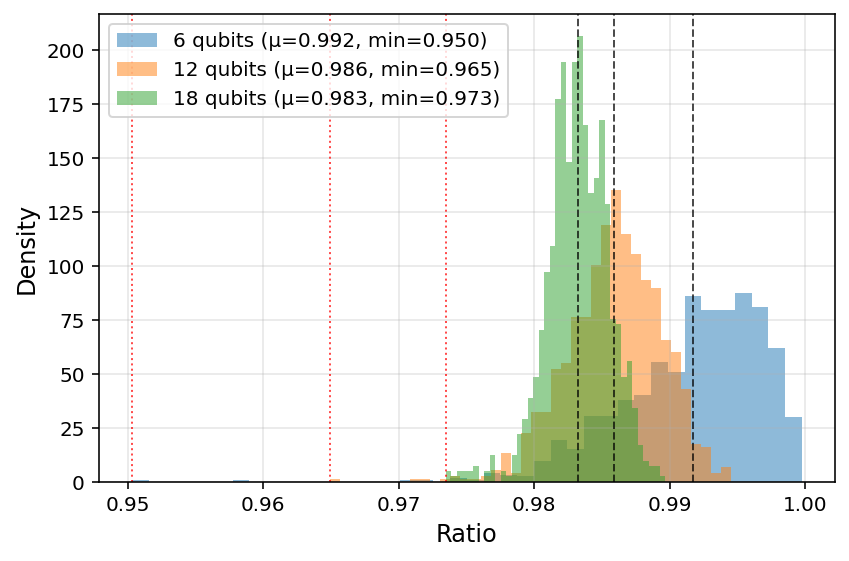}
		\caption{Ratio $\mu$ associated with $\varrho_{\rm r}$ for 1000 random graphs on $6$-, $12$- and $18$-qubit systems.}
		\label{fig:random_ratios}
	\end{figure}

	\subsection{Noise Robustness}
	In practice, current quantum platforms can obtain only approximations to the true ground states. This is due to limited annealing time and experimental imperfections, including decoherence and calibration errors.	

	In order to analyse the robustness of the algorithm with respect to the experimental noise we mimic experimental conditions by adding thermal noise. In that case we consider the resulting Gibbs states at a given temperature as the outcome of the annealing process. 
Specifically, we consider the Hamiltonian $H^{\rm r}$ with coefficients $\omega_{ij}\in [0, 1]$ drawn independently and uniformly random on a fully connected graph, the corresponding Gibbs state at temperature $T$ being given by,
	\begin{equation}
		\varrho^G = \frac{e^{-H^{\rm r}/T}}{Z}\,, \quad  \text{where} \quad Z = \tr\Big(e^{-H^{\rm r}/T}\Big)\, .
\end{equation}  
	For this state,
\begin{equation}
	\tr(\varrho^G H^{\rm r}) = \frac{\sum_{i} E_i e^{-E_i/T}}{\sum_{i} e^{-E_i/T}}\, ,
\end{equation}
where $E_i$ are the eigenvalues of $H^{\rm r}$.
\begin{figure}[tbp]
	\centering
	\includegraphics[width=\linewidth]{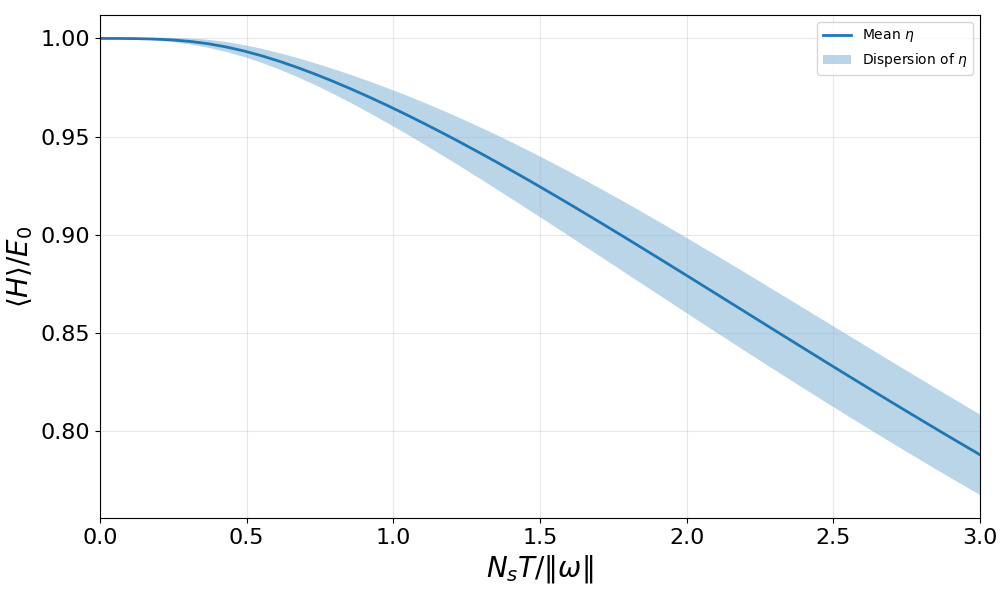}
	\caption{Ground state approximation of 1000 Gibbs states with temperatures $T\in \big(0, 3\|\omega\|/N_s\big]$ in a 12-qubit system, where $\|\omega\|$ is the Frobenius norm.}
	\label{fig:noise_numerics}
\end{figure}
In that case the deviation of the Rydberg hamiltonian ground-state energy  used in  Algorithm~\ref{alg:rounding} due to thermal noise is given by
\begin{equation}
	\eta = \frac{\tr(\varrho^G H^{\rm r})}{E_0}\,.
\end{equation}
Specifically, for 1000 sets of randomly chosen coefficients $\omega$ in a 12-qubit system, we compute $\eta$ for $T \in (0, 3\|\omega\|/N_s]$.
The temperature range investigated is considered significantly lower than the magnetic order-to-paramagnetic phase transition temperature which is of the order of the mean coupling/coefficient strength $\langle \omega_{ij} \rangle$ or equivalently, of the per-spin  Frobenius norm of the coupling matrix $||\omega||/N_s$, where $\|\omega\| = \sqrt{\sum_{ij} \omega_{ij}^2}$ and $N_s$ stands for the number of spins/qubits.

	The mean, minimum, and maximum ratios at each temperature are shown in Fig.~\ref{fig:noise_numerics}. We see that at low temperatures, $\eta$ stays close to $1$ before decreasing linearly with temperature. Notably, for all instances, if $T < 1.6\|\omega\|/N_s$ then $\eta \geq 0.89$, as required to surpass the best-known classical algorithm.

	\section{Moment-SOS hierarchy}

	\label{eq:lasserre}
	The moment-SOS hierarchy (also known as Navascues-Pironio-Acín or quantum Lasserre hierarchy) was introduced for noncommutative polynomial
	optimization in quantum mechanics
	\cite{Navascués_2008, Pironio_2010}, see also Helton~\cite{helton}.
	In particular, this hierarchy provides a sequence of outer approximations to the set of quantum states.
	Here we will employ the hierarchy through \textit{pseudo-moment matrices} to lower bound the ground state energy of a Hamiltonian $H$. 
	
	Consider the set of $m$ operators
	\begin{equation}\label{eq:subset_P}
		P = \{P_1, P_2, \dots, P_m\}\, .
	\end{equation}
	Given a state $\varrho$, let $\Gamma$ be a \textit{moment matrix} of dimension $m\times m$ with entries given by
	\begin{equation}
		\Gamma_{\alpha\beta} = \tr(P_\alpha^\dag P_\beta\varrho)\, .
	\end{equation}
	For simplicity, we can choose the operators $P_\alpha$ to be Pauli strings.
	By construction, the matrix $\Gamma$ is hermitian and satisfies
	\begin{align}\label{eq:cond_moment_matrix}
		&\Gamma_{\a\b} = \Gamma_{\a'\b'} \quad \quad &&\text{if} \quad \quad P_\a^\dag P_\b = P_{\a'}^\dag P_{\b'}\, ,\nn \\
		&\Gamma_{\a\b} = 1 &&\text{if} \quad \quad P_\a^\dag P_\b = I\, ,\nn \\
		&\Gamma \succeq 0 \, .
	\end{align}
	
	The moment-SOS hierarchy considers the set of all matrices $\Delta$ that satisfy the same linear relations as $\Gamma$. This provides a relaxation to the set of moment matrices, allowing one to lower bound the ground state energy of any Hamiltonian.
	Specifically, let $C$ be a matrix such that $\sum_{\a\b} C_{\a\b} P^\dag_\a P_\b = H$. Solving the optimization problem
	\begin{align}\label{eq:SDP}
		\min_{\Delta} \quad & \tr(C \, \Delta) \nn\\
		\text{s.t.} \quad 
		& \Delta \succeq 0, \nonumber\\
		&\Delta = \Delta^T\nn\\
		& \Delta_{\a\b}= \pm \Delta_{\a'\b'}
		&& \text{if } P_\a^\dagger P_\b = \pm P_{\a'}^\dagger P_{\b'}, \nonumber\\
		& \Delta_{\a\a} = 1 && \text{if } \a \in \{1, \dots, m\}\, ,
	\end{align}
	yields a lower bound on the ground state energy of $H$. 
	The tightness of this lower bound depends on the choice of the set $P$ from Eq.~\eqref{eq:subset_P}. 

	In particular, level $k$ of the moment-SOS hierarchy restricts the set $P$ to all Pauli strings acting non-trivially on at most $k$ qubits.
	The resulting pseudo moment matrix $\Delta$ has exponential size in the level of the hierarchy but is polynomial in the number of qubits.

	For the optimization, even low orders of the hierarchy come with an adverse computational scaling. Therefore, to study large-scale systems, one may consider only a subset of operators at each level of the hierarchy while exploiting the symmetries of the Hamiltonian \cite{wang2026scalable}.
	Nevertheless, the study of approximation ratios for quantum Max Cut has so far relied on the moment-SOS hierarchy, with symmetry exploitation playing a limited role.
	
	In particular, existing QMC approximation algorithms are typically based on the full second level of the Lasserre hierarchy. \cite{anshu2020beyond, ParekhApplication2021, lee2022optimizing, king2022improved, lee2024improved, gribling2025improved, apte2025improvedalgorithmsquantummaxcut, bakshi2026sharp}. In contrast,  Algorithm~\ref{alg:rounding} employs a first level of the Lasserre hierarchy with additional linear constraints. This difference results in a classically more efficient algorithm.

	\section{Proofs from main body}\label{sec:proofs}
	\subsection{Product state rounding}

	Algorithm~\ref{alg:rounding} achieves a $0.651$-approximation by returning
	whichever of the states $\varrho^{\rm r}$ and $\varrho^{\rm p}$ has lower energy.
	Here we will provide a precise description of how $\varrho^{\rm p}$ is constructed and prove Observation~\ref{obs:prod_energy}.

	Recall that the performance of Algorithm~\ref{alg:rounding} is judged by the conditional approximation ratio. Given access to the optimal point
	$p^*(G)$ of a secondary hard problem,
	\begin{equation}
		\min_{G} \,\, \frac{\textup{ALG}(G, p^*(G))}{\textup{OPT}(G)} \geq \kappa \, .
	\end{equation}
	Following Gharibian and Parekh \cite{ParekhAlmost2021}, our aim is to construct a product state
	$\varrho^{\mathrm{p}}$ from a lower bound on the optimal solution $\text{OPT}(G)$.
	This product state will have the form
	\begin{align}\label{eq:prod2}
		&\varrho^{\rm p} = \bigotimes_{i=1}^n \varrho_i\,,\nn\\
		&\varrho_i = \frac{1}{2} \left(I + a^X_i X_i + a^Y_i Y_i + a^Z_i Z_i\right)\, ,
	\end{align}
	where each $\mathbf{v}_i = (a^X_i, a^Y_i, a^Z_i)$ is a unit vector
	that is yet to be determined. Then,
	\begin{equation}\label{eq:energy_prod_vec}
		\tr(H^{\rm qmc} \varrho^{\rm p}) = -\frac{1}{4}\sum_{(i j)\in E}\omega_{ij} (1 - \mathbf{v}_i \cdot \mathbf{v}_j) \, .
	\end{equation}
	We construct the vectors $\mathbf{v}_i$ using a pseudo moment matrix $\Delta^*$ that encodes the measurements of a Rydberg array.
	To this end, index a pseudo moment matrix
	$\Delta \in \mathds{R}^{2n\times 2n}$
	by the set of operators
	\begin{equation}\label{eq:op_prod}
		P = \{X_1, Y_1, X_2, Y_2, \dots , X_n, Y_n\}\, .
	\end{equation}
	We demand that the observed Rydberg measurements
	$\expect{X_iX_j + Y_iY_j}$ appear in the relevant entries and optimize the functional representing the energy,
	\begin{align}\label{eq:optimization_SDP}
		\min_{\Delta} \quad & \tr(C \Delta) \nn\\
		\text{s.t.} \quad 
		& \Delta \succeq 0, \nonumber \nn\\
		& \Delta_{\a\a} = 1 && \text{for } \quad\a=\{1, \dots, 2n\}\,, \nn\\
		& \Delta_{\a\b} = \pm \Delta_{\a' \b'}
		&& \text{if}\quad P_\a^\dagger P_\b = \pm P_{\a'}^\dagger P_{\b'}\,, \nonumber\\
		& \Delta_{\a\b} = \tfrac{1}{2}\langle X_i X_j+Y_iY_j\rangle
		&& \text{if}\quad P_\a^\dagger P_\b = X_iX_j \nn\\
		&&&
		\text{or}\quad P_\a^\dagger P_\b = Y_iY_j,
	\end{align}
	where $C$ is such that $H^{\rm r}=\sum_{\a\b}C_{ij} P^\dag_i P_j$.  
	Necessarily, any solution $\Delta^*$
	to the SDP in Eq.~\eqref{eq:optimization_SDP} returns the observed ground state energy of the Rydberg Hamiltonian.
	
	We construct a set of unit vectors $\mathbf{v}_i\in \mathds{R}^3$ defining a product state by Eq.~\eqref{eq:prod2}.
	Perform a Cholesky decomposition of $\Delta^*$ to obtain a real matrix
	$L$ such that $\Delta^* = L L^T$. Note that the columns $L_\a$ are associated with the operators of $P$ from Eq.~\eqref{eq:op_prod} and satisfy
	\begin{equation}
		L_\a \cdot L_\b = \Delta^*_{\a\b}\,.
	\end{equation}
	The columns are unit vectors because $L_\a \cdot L_\a = \Delta_{\a\a} = 1$.
	
	Then, for each qubit $i$, concatenate the columns $L_\a$ associated with the operators $X_i$ and $Y_i$. Let $\mathbf{x}_i\in \mathds{R}^{4n}$ be the unit vector obtained by dividing the resulting vector by $\sqrt{2}$. By construction, these satisfy:
	\begin{equation}\label{eq:objective_prod}
		\mathbf{x}_i \mathbf{x}_j = \frac{1}{2}\tr((X_iX_j + Y_iY_j)\varrho^{\rm r})\,.
	\end{equation}
	Now note that these are unit vectors $\mathbf{x}_i$ have with dimension $4n$ instead of $3$.
	To construct Bloch vectors $\mathbf{v}_i \in \R^3$,
	we use the following result.
	\begin{lemma}[Bri\"et-de Oliveira Filho-Vallentin~\cite{Briet_2014}, Lemma 2.1]
	\label{lemma:briet}
		Let $\mathbf{x}_i, \mathbf{x}_j$ be unit vectors in $\mathds{R}^n$ and let $R\in \mathds{R}^{r\times n}$ be a random matrix whose entries are distributed independently according to the standard normal distribution with mean $0$ and variance $1$. Then,
		\begin{equation*}
			\mathds{E}\left[\frac{R\mathbf{x}_i}{\|R\mathbf{x}_i\|} \cdot \frac{R\mathbf{x}_j}{\|R\mathbf{x}_j\|} \right] = \frac{2}{r}\left(\frac{\Gamma((r+1)/2)}{\Gamma(r/2)}\right)^2 \hat{F}(\mathbf{x}_i\cdot \mathbf{x}_j)\, .
		\end{equation*}
		Here $\hat{F}(\mathbf{x}_i\cdot \mathbf{x}_j)$ is given by
		\begin{align*}
			&\hat{F}(\mathbf{x}_i\cdot \mathbf{x}_j)
			 = (\mathbf{x}_i\cdot \mathbf{x}_j)
			\,  {}_2F_1
			\bigg(
			\begin{matrix} 1/2, 1/2 \\ r/2+1 \end{matrix} 
			; (\mathbf{x}_i\cdot \mathbf{x}_j)^2
			\bigg)
			\nn\\
			&= \sum_{k=0}^\infty
			\frac{(1\cdot 3 \cdots (2k-1))^2}
			{(2\cdot 4 \cdots 2k)((r+2)\cdot(r+4)\cdots (r+2k))}
			(\mathbf{x}_i\cdot \mathbf{x}_j)^{2k+1}
		\end{align*}
		with ${}_2F_1$ the hypergeometric function.
	\end{lemma}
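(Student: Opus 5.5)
The plan is to reduce the claim to a computation about two correlated Gaussian vectors in $\R^r$, and then expand in the correlation $t=\mathbf{x}_i\cdot\mathbf{x}_j$ using the Hermite (Mehler) expansion.

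\textbf{Reduction to a correlated pair.} The law of $R$ is invariant under $R\mapsto RO$ for orthogonal $O$. So I may assume
\[
\mathbf{x}_i=e_1, \qquad \mathbf{x}_j=t\,e_1+\sqrt{1-t^2}\,e_2 .
\]
Then $u:=R\mathbf{x}_i$ and $w:=R\mathbf{x}_j$ are standard Gaussian vectors in $\R^r$, jointly Gaussian with $\mathbb{E}[u_a w_b]=t\,\delta_{ab}$. Write $f(u)=u/\|u\|$. The quantity to compute is $\Phi(t)=\mathbb{E}[f(u)\cdot f(w)]$.

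\textbf{Mehler expansion.} Each component of $f$ lies in $L^2(\gamma_r)$, where $\gamma_r$ is the standard Gaussian measure. Expand $f=\sum_{m\ge 0} f_m$, where $f_m$ is the projection onto Hermite polynomials of total degree $m$. For correlated standard Gaussian pairs, Mehler's formula gives the Ornstein--Uhlenbeck diagonalisation
\[
\mathbb{E}[f(u)\cdot f(w)]=\sum_{m} t^m\,\|f_m\|^2_{L^2(\gamma_r)} .
\]
Since $f$ is odd, only odd $m=2k+1$ contribute. This already shows that $\Phi$ is an odd power series in $t$, matching the structure of $\hat F$.

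\textbf{Computing the coefficients.} This is the main step. $f$ is $O(r)$-equivariant and is the radial function $\|u\|^{-1}$ times the degree-one harmonic $u$. Hence each $f_{2k+1}$ has the form
\[
f_{2k+1}(u)=c_k\,u\,L_k^{(r/2)}\!\bigl(\|u\|^2/2\bigr),
\]
with $L_k^{(\alpha)}$ the generalised Laguerre polynomial. This is the standard description of Hermite eigenspaces in polar coordinates: degree-$\ell$ harmonic times a Laguerre polynomial in $\|u\|^2/2$ of order $\ell-1+r/2$. Then
\[
\|f_{2k+1}\|^2=\frac{\langle f, u\,L_k\rangle^2}{\|u\,L_k\|^2}.
\]
Both norms reduce, in the variable $s=\|u\|^2/2$, to one-dimensional integrals against $s^{r/2}e^{-s}$. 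The denominator follows from Laguerre orthogonality, $\int L_k^{(\alpha)}(s)^2 s^\alpha e^{-s}\,ds=\Gamma(k+\alpha+1)/k!$. The numerator is $\int s^{(r-1)/2}L_k^{(r/2)}(s)e^{-s}\,ds$, a Laguerre moment of a half-integer power, which evaluates via the explicit sum for $L_k$ to a ratio of Gamma functions such as $\Gamma(k-\tfrac12)/\Gamma(-\tfrac12)$ times $\Gamma((r+1)/2)$.

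\textbf{Assembling.} Put these together and simplify $\|f_{2k+1}\|^2$ to
\[
\frac{2}{r}\left(\frac{\Gamma((r+1)/2)}{\Gamma(r/2)}\right)^2\frac{(1/2)_k^2}{k!\,(r/2+1)_k},
\]
which is exactly the stated series coefficient, since $(1\cdot3\cdots(2k-1))^2/(2\cdot4\cdots 2k\,(r+2)\cdots(r+2k))=(1/2)_k^2/(k!\,(r/2+1)_k)$. As a check, the $k=0$ term is $\|f_1\|^2=(\mathbb{E}\|u\|)^2/r=\tfrac{2}{r}(\Gamma((r+1)/2)/\Gamma(r/2))^2$. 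Summing over $k$ yields $t\,{}_2F_1(1/2,1/2;r/2+1;t^2)$.

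I expect the Laguerre moment integral and keeping the normalisations consistent to be the main obstacle. If it gets messy, I would instead first obtain the coefficients at $t$ near $0$ by differentiating $\Phi$ under the integral, and cross-check against the $k=0,1$ terms.
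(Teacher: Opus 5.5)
The paper gives no proof of this lemma. It imports it from Bri\"et--de Oliveira Filho--Vallentin, who prove it by a direct computation along these lines. Write the expectation against the joint Gaussian density of $(R\mathbf{x}_i,R\mathbf{x}_j)$, pass to polar coordinates, expand the cross term $\exp\bigl(t\,x\cdot y/(1-t^2)\bigr)$ in powers of $t$, and integrate term by term with Beta and Gamma integrals. This gives $\frac{2}{r}\bigl(\Gamma(\frac{r+1}{2})/\Gamma(\frac r2)\bigr)^2(1-t^2)^{r/2}\,t\,{}_2F_1\bigl(\frac{r+1}{2},\frac{r+1}{2};\frac r2+1;t^2\bigr)$. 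Euler's transformation ${}_2F_1(a,b;c;z)=(1-z)^{c-a-b}{}_2F_1(c-a,c-b;c;z)$ then converts it to the stated form.

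Your route is genuinely different and correct. The Ornstein--Uhlenbeck diagonalisation, together with the harmonic-times-Laguerre description of the Hermite chaos, gives you the Taylor coefficients directly as $\|f_{2k+1}\|^2$. So oddness and nonnegativity of the coefficients are structural rather than read off a formula, and no hypergeometric transformation is needed. You also get a built-in check at $t=1$: $\sum_k\|f_{2k+1}\|^2=\|f\|^2=1$, which is Gauss's summation for ${}_2F_1(\frac12,\frac12;\frac r2+1;1)$. The direct computation, by contrast, needs only classical one-variable integrals and no structure theory of $L^2(\gamma_r)$.

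One correction in the step you flagged as the main obstacle. Take $\alpha=\frac r2$ and $\beta=\frac{r-1}{2}$ in the moment formula $\int_0^\infty s^{\beta}e^{-s}L_k^{(\alpha)}(s)\,ds=\Gamma(\beta+1)\Gamma(k+\alpha-\beta)/\bigl(k!\,\Gamma(\alpha-\beta)\bigr)$. Since $\alpha-\beta=\frac12$, this gives $I_k:=\int_0^\infty s^{(r-1)/2}e^{-s}L_k^{(r/2)}(s)\,ds=\Gamma(\frac{r+1}{2})\,(\frac12)_k/k!$. The Gamma ratio is therefore $\Gamma(k+\frac12)/\Gamma(\frac12)$, not $\Gamma(k-\frac12)/\Gamma(-\frac12)$; the latter would produce $(-\frac12)_k^2$ and the wrong series.

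With the correct value, use the normalisations $\mathbb{E}[\|u\|L_k]=\sqrt2\,I_k/\Gamma(\frac r2)$ and $\mathbb{E}[\|u\|^2L_k^2]=2\Gamma(k+\frac r2+1)/\bigl(k!\,\Gamma(\frac r2)\bigr)$. These give $\|f_{2k+1}\|^2=k!\,I_k^2/\bigl(\Gamma(\frac r2)\Gamma(k+\frac r2+1)\bigr)$. Substituting $\Gamma(k+\frac r2+1)=\frac r2\Gamma(\frac r2)(\frac r2+1)_k$ turns this exactly into the coefficient you wrote.
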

	This Lemma allows to evaluate the expected inner product of two vectors in $\mathds{R}^n$,
	after they have been randomly projected onto $\mathds{R}^r$ and subsequently normalized.
	
	We use this random projection to map each unit vector
	$\mathbf{x}_i \in \mathds{R}^{4n}$ into the coefficients $\mathbf{v}_i\in \mathds{R}^3$ of a one-qubit state,
	\begin{equation}\label{eq:proj_prods}
		\mathbf{v}_i =\frac{R\mathbf{x}_i}{\|R\mathbf{x}_i\|} = \begin{pmatrix}
			a^X_i\\
			a^Y_i\\
			a^Z_i
		\end{pmatrix}\,.
	\end{equation}
	Here $R\in \mathds{R}^{3\times 4n}$ is a matrix with entries sampled independently from a Gaussian distribution with mean 0 and standard deviation 1.
			
	\prodenergy*
	
	\begin{proof}
	This proof is equivalent to that from Ref.~\cite{ParekhAlmost2021}. However, we include it here for completeness.
	
	The product state has energy given by 
	\begin{equation}
		\tr(H^{\rm qmc} \varrho^{\rm p}) = -\frac{1}{4} \sum_{(ij)\in E} \omega_{ij} (1- \mathbf{v}_i\cdot \mathbf{v}_j)\, .
	\end{equation}
	By definition, we can express the vectors $\mathbf{v}_i$ using the randomized projection of $\mathbf{x}_i$:
	\begin{equation}
		\tr(H^{\rm qmc} \varrho^{\rm p}) = -\frac{1}{4} \sum_{(ij)\in E} \omega_{ij} \Bigg(1- \frac{R\mathbf{x}_i}{\|R\mathbf{x}_i\|}\cdot \frac{R\mathbf{x}_j}{\|R\mathbf{x}_j\|}\Bigg)\, .
	\end{equation}
	
	Recall that $R \in \mathds{R}^{3\times 4n}$. Thus, by Lemma~\ref{lemma:briet},
	\begin{equation}
		\mathbb{E}\left[\frac{R\mathbf{x}_i}{\|R\mathbf{x}_i\|} \cdot \frac{R\mathbf{x}_j}{\|R\mathbf{x}_j\|}\right] = \frac{8}{3\pi} \hat{F}(\mathbf{x}_i \cdot \mathbf{x}_j)\, .
	\end{equation}
	Then, the expected energy for the state $\varrho^{\rm p}$ is
	\begin{equation}
		\mathbb{E}\left[\tr(H^{\rm qmc} \varrho^{\rm p})\right] = -\frac{1}{4} \sum_{(ij)\in E}\omega_{ij}\left(1-\frac{8}{3\pi} \hat{F}(t_{ij})\right)\, ,
	\end{equation}
	where $t_{ij}= \mathbf{x}_i\cdot \mathbf{x}_j = \frac{1}{2}\expect{X_iX_j + Y_iY_j}$.
	\end{proof}

	\subsection{Two-body marginals}\label{proof:zizj}
	
	\boundZiZj*
	
	\begin{proof}
		Consider a state $\varrho$. For a given pair of qubits $i,j$, let $\Delta$ be the pseudo moment matrix indexed by the set
		\begin{equation}
			P = \bigg\{I, X_iX_j, Y_iY_j, Z_iZ_j\bigg\}\, ,
		\end{equation}
		with entries given by $\Delta_{\a \b} = \tr( P_\a^{\dag}P_\b\varrho)$. Thus,
		\begin{equation}
			\Delta = \begin{pmatrix}
				1 & \expect{X_iX_j} & \expect{Y_iY_j} &\expect{Z_iZ_j}\\
				\expect{X_iX_j} & 1 & -\expect{Z_iZ_j} & -\expect{Y_iY_j}\\
				\expect{Y_iY_j} & -\expect{Z_iZ_j} & 1 & -\expect{X_iX_j}\\
				\expect{Z_iZ_j} & -\expect{Y_iY_j} & -\expect{X_iX_j} & 1
			\end{pmatrix}\, .
		\end{equation}
		We now want to upper bound the expectation value $\expect{Z_iZ_j}$ using  $t_{ij}= \frac{1}{2}\expect{X_iX_j+Y_iY_j}$. In other words, given $\expect{X_iX_j + Y_iY_j} = 2t_{ij}$, maximize $\expect{Z_iZ_j}$.
	
		Consider the following moment matrix relaxation.
		\begin{align}
			\max \quad & \Delta_{03} \label{eq:opt_ZiZj} \\[4mm] 
			\text{such that} \quad & \Delta \succeq 0 \tag{\theequation.1} \label{append:cond_1}\\
			& \Delta = \Delta^T \tag{\theequation.2}\label{append:cond_2} \\
			& \Delta_{01} = -\Delta_{32} \tag{\theequation.3} \label{append:cond_3}\\
			& \Delta_{02} = -\Delta_{31} \tag{\theequation.4} \label{append:cond_4}\\
			& \Delta_{03} = -\Delta_{12} \tag{\theequation.5} \label{append:cond_5}\\
			& \Delta_{01} + \Delta_{02} = 2t_{ij} \tag{\theequation.6} \label{append:cond_6}
		\end{align}
		In order to obtain analytical bounds for $\expect{Z_iZ_j}$ using the moment matrix formalism, we will symmetry-reduce Eq.~\eqref{eq:opt_ZiZj}.
		Note that the constraint set and the objective function of the SDP~\eqref{eq:opt_ZiZj}
		are invariant under the unitary operation
		\begin{equation}\label{eq:inv_map}
		X \mapsto Y , \quad\quad Y\mapsto -X , \quad\quad Z\mapsto Z	\, .
		\end{equation}
		We can thus consider the optimization
		\begin{align}\label{eq:pseudo-zizj}
			\max \quad & \c\nn\\[2mm]
			\text{such that} \quad & -1\leq t_{ij} \leq 1\\
			&\begin{pmatrix}
				1 & t_{ij} & t_{ij} & \c\\
				t_{ij} & 1 & -\c & -t_{ij}\\
				t_{ij} & -\c & 1 & -t_{ij}\\
				\c & -t_{ij} & -t_{ij} & 1
			\end{pmatrix}\succeq 0 \, .
		\end{align}
		which by standard arguments (invariant feasibile set and invariant objective function, see e.g. \cite{Gatermann_2004}),
		achieves the same objective value as the SDP in Eq.~\eqref{eq:opt_ZiZj}. Here the entry $\c$ corresponds to $\expect{Z_iZ_j}$.

		Note that we can now compute the eigenvalues of $\Delta$ exactly,
		\begin{align}\label{eq:eigen_append}
			\lambda_1 &= 1+2t_{ij}-\c \,,\nn\\
			\lambda_2 &=  1-2t_{ij}-\c \,,\nn\\
			\lambda_3 &= \lambda_4 = 1+\c \, .
		\end{align}
		Recall that a bound on $\expect{Z_iZ_j}$ is obtained by maximizing $\c$.
	The eigenvalues from Eq.~\eqref{eq:eigen_append} are non-negative, and thus $\Delta \succeq 0$, if and only if
	\begin{align}
		& \expect{Z_iZ_j}\leq 1+ 2t_{ij} \quad \text{if }\quad t_{ij} \in[-1, 0]\,, \nn\\
		& \expect{Z_iZ_j} \leq 1- 2 t_{ij} \quad \text{if }\quad t_{ij}\in[0, 1]\, .
	\end{align}
	Since the program from Eq.~\eqref{eq:pseudo-zizj} returns an upper bound of $\expect{Z_iZ_j}$ for any pair of qubits $(i, j)$, the Lemma follows.
	This ends the proof.
	\end{proof}
	
	\section{
		$2/3$-approximation ratio (non-constructive)}\label{sec:2_3}
	
	In this section, we show that there exists a product state $\varrho^{\rm i}$ with energy at most $-\frac{1}{4}\sum_{(ij)\in E}\omega_{ij}(1-\frac{1}{2}\expect{X_iX_j + Y_iY_j})$. This allows us to define an algorithm with a conditional approximation ratio of $2/3$. However, the proof is non-constructive and does not identify the state $\varrho^{\rm i}$.
	\begin{observation}\label{theo:non-const_ratio}
		Let $\varrho^{\rm r}$ be the ground state of the Rydberg Hamiltonian $H^{\rm r}$. The value
		\begin{equation*}
			{\rm arg}\min\bigg\{\tr(H^{\rm qmc}\varrho^{\rm r}), \sum_{(ij)\in E} \omega_{ij} \Big(1 - \frac{1}{2}\tr(X_iX_j + Y_iY_j)\varrho^{\rm r}\Big)\bigg\}
		\end{equation*}
		has a conditional approximation ratio of $\frac{2}{3}$ for QMC.
	\end{observation}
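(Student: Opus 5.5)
The statement is easiest to read with the factor $-\frac14$ that appears just before Observation~\ref{theo:non-const_ratio}. That is, I read the second quantity as the energy bound $-\frac{1}{4}\sum_{(ij)\in E}\omega_{ij}(1-t_{ij})$ with $t_{ij}=\frac12\expect{X_iX_j+Y_iY_j}$. I will prove two things:
\begin{enumerate}
\item[(a)] there exists a product state $\varrho^{\rm i}$ with $\tr(H^{\rm qmc}\varrho^{\rm i})\le -\frac14\sum_{(ij)\in E}\omega_{ij}(1-t_{ij})$;
\item[(b)] returning the better of $\varrho^{\rm r}$ and $\varrho^{\rm i}$ gives ratio at least $2/3$.
\end{enumerate}
Part (b) is mechanical and follows the proof of Theorem~\ref{eq:main_theorem}. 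Part (a) is the genuine content, and it is where I expect the difficulty.

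\emph{Step 1 (the ratio).} I bound the minimum of the two energies from above by the convex combination $p\,\tr(H^{\rm qmc}\varrho^{\rm r})+(1-p)\,\tr(H^{\rm qmc}\varrho^{\rm i})$. For the first term I use Eq.~\eqref{eq:energy_Rydberg}, which gives at most $-\frac14\sum\omega_{ij}(-4\min\{t_{ij},0\})$. For the optimum I use the lower bound Eq.~\eqref{eq:bound}, in its clipped form Eq.~\eqref{eq:weaker_bound}. By Eq.~\eqref{eq:lower_ratio} it then suffices to check, for every $t\in[-1,\frac13)$, that
\begin{equation*}
p\,(-4t)+(1-p)(1-t)\;\ge\;\tfrac23\,(1-3t).
\end{equation*}
Here I use $-4\min\{t,0\}\ge -4t$, which lets me treat the left side as linear in $t$. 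Checking the two endpoints suffices:
\begin{itemize}
\item at $t=\frac13$ the condition becomes $p\le\frac13$;
\item at $t=-1$ it becomes $p\ge\frac13$.
\end{itemize}
With $p=\frac13$ the inequality therefore holds on the whole interval, which gives $\kappa\ge 2/3$.

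\emph{Step 2 (existence of $\varrho^{\rm i}$).} By Eq.~\eqref{eq:energy_prod_vec}, I need unit vectors $\mathbf v_i\in\R^3$ with $\sum\omega_{ij}\,\mathbf v_i\cdot\mathbf v_j\le\sum\omega_{ij}t_{ij}$. The natural first attempt is to extract a probability distribution over product states from $\varrho^{\rm r}$ itself, and then argue that some configuration in its support beats the average.

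Concretely, I would first try the spin-$\frac12$ coherent-state (Lieb upper/lower symbol) representation. One writes $\sigma^a=3\int n^a\,|n\rangle\langle n|\,dn$ and uses the nonnegative $Q$-function of $\varrho^{\rm r}$. This expresses $\expect{X_iX_j+Y_iY_j}$ as an average of classical planar correlations $n_i^xn_j^x+n_i^yn_j^y$ over a probability measure on $(S^2)^n$. An averaging argument then yields a configuration $\{\mathbf v_i\}$ of Bloch vectors.

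\emph{Main obstacle.} The constant is the problem. The symbol representation introduces a factor ($9$ versus $2$), and because $\sum\omega_{ij}t_{ij}\le 0$ that factor points in the wrong direction. The comparison must therefore be sharpened.

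The fallback I would try next is to compare both quantities through the first-level SDP of Eq.~\eqref{eq:SDP_Rydberg}. The $2n$-dimensional vectors $L_\a$ of its Cholesky factor realise exactly $t_{ij}$. I would then look for a rank-reduction argument, using that the objective involves only planar correlations, to obtain a feasible configuration in $\R^3$ with no larger objective. This is non-constructive in the same sense as the statement: it would certify existence without giving a rounding procedure.

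If neither route closes the constant, the ratio in Step 1 still goes through verbatim for any product state achieving the bound. So the whole difficulty is isolated in this existence claim.
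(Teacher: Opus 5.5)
Your Step 1 is correct and matches the paper's computation. After the linearization $-4\min\{t,0\}\ge -4t$, the difference of the two sides is $(\tfrac13-p)(1+3t)$, so $p=\tfrac13$ gives $\tfrac23$.

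\textbf{The gap.} Step 2 is the real content of the observation, and you never prove it. You do not show that a product state $\varrho^{\rm i}$ with $\tr(H^{\rm qmc}\varrho^{\rm i})\le-\frac14\sum_{(ij)\in E}\omega_{ij}(1-t_{ij})$ exists. So what you have is only the implication ``if such a state exists, the ratio is $2/3$''. Neither of your routes supplies the state:
\begin{itemize}
\item The coherent-state route loses a constant in the wrong direction, as you note yourself.
\item The rank-reduction route has no mechanism as stated, and it cannot work using only the Gram vectors of the first-level SDP. A lemma turning arbitrary unit vectors $\mathbf x_i\in\R^{4n}$ into $\mathbf v_i\in\R^3$ without increasing $\sum\omega_{ij}\mathbf x_i\cdot\mathbf x_j$ is false in general: product states have a genuine integrality gap against this relaxation. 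The analogous rank-one statement already fails on the 5-cycle, where unit vectors give $5\cos(4\pi/5)<-3=\min_{s}\sum\omega_{ij}s_is_j$.
\end{itemize}
Any valid argument must use that the $t_{ij}$ are correlations of an actual quantum state.

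\textbf{The missing idea} is a variational comparison with a classical Ising model.
\begin{itemize}
\item The operator $H^{\rm ising}=\sum_{(ij)\in E}\omega_{ij}X_iX_j$ is diagonal in the product basis $\{\ket{+},\ket{-}\}^{\otimes n}$.
\item Hence its smallest eigenvalue $m=\min_{s\in\{\pm1\}^n}\sum_{(ij)\in E}\omega_{ij}s_is_j$ is attained by a product basis state $\varrho^{\rm i}$, with Bloch vectors $\pm(1,0,0)$ and $\tr(Y_iY_j\varrho^{\rm i})=\tr(Z_iZ_j\varrho^{\rm i})=0$.
\item By the variational principle, $m\le\tr(H^{\rm ising}\varrho^{\rm r})$.
\item The paper takes $\varrho^{\rm r}$ invariant under $X\mapsto Y$, $Y\mapsto -X$, so that $\tr(X_iX_j\varrho^{\rm r})=t_{ij}$. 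Invariance is not even needed: $\sum_{(ij)\in E}\omega_{ij}Y_iY_j$ has the same smallest eigenvalue $m$, so $m$ lies below both $\sum\omega_{ij}\expect{X_iX_j}$ and $\sum\omega_{ij}\expect{Y_iY_j}$, hence below their average $\sum\omega_{ij}t_{ij}$.
\end{itemize}
Therefore $\tr(H^{\rm qmc}\varrho^{\rm i})=-\frac14\sum\omega_{ij}\bigl(1-\tr(X_iX_j\varrho^{\rm i})\bigr)\le-\frac14\sum\omega_{ij}(1-t_{ij})$. This is your claim (a), and your Step 1 then completes the proof.

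This argument even yields collinear (rank-one) Bloch vectors. That is possible only because a Hamiltonian built from a single Pauli correlation is classical. It is non-constructive because finding $\varrho^{\rm i}$ amounts to solving classical Max Cut.
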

	\begin{proof}
		
		Consider the Ising Hamiltonian associated with the graph $G = (V, E, \omega)$ that is associated to a QMC instance,
		\begin{equation}
			H^{\rm ising} = \sum_{(ij)\in E}\omega_{ij} X_iX_j \, .
		\end{equation}
		Recall that the states $\ket{\pm} = \tfrac{1}{\sqrt{2}}
		\big(\ket{0} \pm \ket{1}\big)$ satisfy $X\ket{\pm} = \pm 1 \cdot \ket{\pm}$.
		Therefore, the Hamiltonian $H^{\rm ising}$ is diagonal in the product basis
		\begin{equation}
			B = \Big\{\bigotimes_{i=1}^{n} \ket{\psi_i} \ \big| \ \ket{\psi_i}\in \big\{\ket{+}, \ket{-}\big\}\Big\}\, .
		\end{equation}

		Since $B$ forms a complete basis of eigenstates, at least one of these states is a ground state of $H^{\rm ising}$. By construction, any ground state of the form $\varrho^{\rm i} = \ket{\psi^{\rm i}}\bra{\psi^{\rm i}}$ with $\ket{\psi^{\rm i}} \in B$ satisfies
		\begin{equation}\label{eq:cond_prod_ising}
			\tr(Y_iY_j \varrho^{\rm i}) = \tr(Z_iZ_j \varrho^{\rm i})  = 0 \, .
		\end{equation}

		 Next, consider a ground state $\varrho^{\rm r}$ of $H^{\rm r}$ which is invariant under 
		 \begin{equation}
		 	X_i \mapsto Y_i \quad \quad Y_i \mapsto -X_i \quad \quad Z_i \mapsto Z_i\,.
		 \end{equation}
		 It follows that
		\begin{align}
			\tr(X_iX_j \varrho^{\rm r}) = \frac{1}{2} \tr\big((X_iX_j + Y_iY_j) \varrho^{\rm r}\big)
		\end{align}
		The energy of this state under the Hamiltonian $H^{\rm ising}$ is thus given by
		\begin{equation}
			\tr(H^{\rm ising} \varrho^{\rm r}) = \frac{1}{2}\sum_{(ij)\in E}\omega_{ij} \tr((X_iX_j + Y_iY_j) \varrho^{\rm r})\, .
		\end{equation}
		However, by definition $\varrho^{\rm i}$ minimizes the energy of $H^{\rm ising}$ and thus
		\begin{align}\label{eq:ineq_exists}
			\tr(H^{\rm ising}\varrho^{\rm i}) &\leq  \tr(H^{\rm ising}\varrho^{\rm r})\nn\\
			&=\frac{1}{2} \sum_{(ij)\in E}\omega_{ij}\tr((X_iX_j + Y_iY_j) \varrho^{\rm r}) \, .
		\end{align}
		We can use the state $\varrho^{\rm i}$ to upper bound the energy of QMC. Following the condition from Eq.~\eqref{eq:cond_prod_ising}:
 		\begin{align}
			\tr(H^{\rm qmc}\varrho^{\rm i}) &= -\frac{1}{4}\sum_{(ij)\in E}\omega_{ij}\Big(1-\tr\big((X_iX_j-Y_iY_j-Z_iZ_j)\varrho^{\rm i}\big)\Big)\nn\\
			&= -\frac{1}{4}\sum_{(ij)\in E} \omega_{ij} (1-\tr(X_iX_j \varrho^{\rm i}))
		\end{align}
		Next, we write this expectation value using the bound from Eq.~\eqref{eq:ineq_exists}
		\begin{align}\label{eq:QMC_energy_prod_ideal}
			\tr(H^{\rm qmc} \varrho^{\rm i}) &= -\frac{1}{4}\sum_{(ij)\in E} \omega_{ij} (1-\tr(X_iX_j \varrho^{\rm i}))\nn\\
			&\leq -\frac{1}{4}\sum_{(ij)\in E}\omega_{ij}\Big(1 - \frac{1}{2} \tr\big((X_iX_j + Y_iY_j)\varrho^{\rm r}\big) \Big) \, .
		\end{align}
		
		To find the conditional approximation ratio of the algorithm, we substitute the upper and lower QMC bounds from Eqs.~\eqref{eq:bound},~\eqref{eq:energy_Rydberg},~\eqref{eq:QMC_energy_prod_ideal} into Eq.~\eqref{eq:lower_ratio}. This yields the conditional approximation ratio
		\begin{align}
			\kappa &\geq \max_{p\in[0, 1]}\min_{t\in [-1, \frac{1}{3})} \Bigg\{p\frac{\,  - 4t}{1-3t} + (1-p) \frac{(1- t)}{1-3t}\Bigg\}\nn\\
			&= \frac{2}{3} \, ,
		\end{align}
		for $p^*=\frac{1}{3}, t^* = -\frac{1}{3}$.
	\end{proof}
	\textit{Remark:} This algorithm only employs measurements of a Rydberg array. However, the method is non-constructive in the sense that it does not return the state which produces the approximation. 
	
\end{document}